\newcommand{\cc}{\mathbb{C}}
\newcommand{\qq}{\mathbb{Q}}
\newcommand{\zz}{\mathbb{Z}}
\newcommand{\lh}{{\mathfrak h}}
\newcommand{\diag}{\operatorname{diag}}
\newcommand{\tr}{\operatorname{Tr}}
\newtheorem{theorem}{Theorem}
\newtheorem{lemma}[theorem]{Lemma}
\newtheorem*{lemma*}{Lemma}
\newtheorem{proposition}[theorem]{Proposition}
\newtheorem{remark}[theorem]{Remark}
\newtheorem*{open*}{Open~question}
\begin{document}

\title{Orbits of monomials and factorization into products of linear forms\thanks{The authors are supported by ANR project
CompA (code ANR--13--BS02--0001--01). Email: {\tt pascal.koiran@ens-lyon.fr}, {\tt ressayre@math.univ-lyon1.fr}}\\
}

\author{Pascal Koiran\\
Universit\'e de Lyon, Ecole Normale Sup\'erieure de Lyon, LIP\thanks{UMR 5668 ENS  Lyon, CNRS, UCBL.}\\
Nicolas Ressayre\\
Univ Lyon, Universit\'e Claude Bernard Lyon 1,\\
Institut Camille Jordan (CNRS UMR 5208),\\
43 blvd. du 11 novembre 1918, F-69622 Villeurbanne cedex, France}

\maketitle

\begin{abstract}
  This paper is devoted to the factorization of multivariate polynomials into
  products of linear forms, a problem which has applications to  differential
  algebra, to the resolution of systems of polynomial equations
  and to Waring decomposition
  (i.e., decomposition in sums of $d$-th powers
  of linear forms; 
    this problem is also known as
   {\em symmetric tensor decomposition}).
   We provide three black box algorithms for this problem.
  
  Our main contribution is an algorithm
    motivated by the application to Waring decomposition.
  This algorithm reduces the corresponding factorization problem to simultaenous
  matrix diagonalization, a standard task in linear algebra.
  The algorithm relies on ideas from invariant theory, and more specifically
  on Lie algebras.

{Our second algorithm reconstructs a factorization from several bivariate
  projections. Our  third algorithm reconstructs it from the determination
  of the zero set of the input polynomial, which is a union of hyperplanes.}
\end{abstract}

\section{Introduction}

The main contribution of this paper is a simple algorithm which determines 
whether an input polynomial $f(x_1,\ldots,x_n)$ has a factorization of the form
\begin{equation} \label{problem}
f(x)=l_1(x)^{\alpha_1} \cdots l_n(x)^{\alpha_n}
\end{equation}
where the linear forms $l_i$ are linearly independent.
The algorithm outputs such a factorization if there is one.
Our algorithm works in the black box model:
we assume that we have access to the input polynomial
$f$ only through a  ``black box'' 
which on input $(x_1,\ldots,x_n)$ outputs $f(x_1,\ldots,x_n)$.

We therefore deal with a
very special case of the polynomial factorization problem.
As explained in Section~\ref{waring} below, this special case already has an
  interesting application to Waring decomposition.
The algorithm is based on (elementary) ideas of invariant theory,
but is nonetheless quite simple: it essentially boils down to the
simultaneous diagonalization of commuting matrices, a standard task
in linear algebra.
For the general problem of factorization in the black box model there
is a rather involved algorithm by Kaltofen and Trager~\cite{KalTra90},
see Section~\ref{comparison} for more details.
Our factorization  algorithm  seems to be the first to rely on ideas from invariant theory, and to reduce a multivariate polynomial factorization problem
to matrix
diagonalization.
Let us now explain why it is natural to use invariant theory in this context.

\subsection{Connection with invariant theory} \label{gct}

Consider a field $K$ of characteristic 0
and a polynomial $f \in K[x_1,\ldots,x_n]$.
By definition, the orbit $\mathrm{Orb}(f)$
of $f$ under the action of the general linear group is
the set of polynomials of the form $f(A.x)$ where $A \in GL_n(K)$
is an arbitrary invertible matrix.
In their Geometric Complexity Theory program~\cite{mulmuley01,mulmuley08},
Mulmuley and Sohoni
have proposed the following approach to lower bounds in algebraic complexity:
in order to prove a lower bound for a polynomial~$g$, show that it
does not belong to a suitable {\em orbit closure}~$\overline{\mathrm{Orb}(f)}$.
The case where $f$ is the determinant polynomial is of particular interest
as it allows to address the infamous ``permanent versus determinant''
problem. Mulmuley and Sohoni have also proposed a specific
representation-theoretic approach to deal with this orbit closure problem.
As it turns out, the representation-theoretic approach provably
does not work~\cite{burgisser16}.
The general approach based on orbit closure remains plausible, but has so far
not produced any major lower bound result because the orbit closure of
the determinant is difficult to describe.
By contrast, the renewed interest in invariant theory has led
to new positive results, i.e., to new polynomial time algorithms:
see for instance~\cite{burgisser13,BGOWW17,garg16,mulmuley12} and especially~\cite{kayal2012affine}, which
is a main inspiration for this paper.

We deal here with the simplest of all orbits, namely, the orbit
of a single monomial $x_1^{\alpha_1} \ldots x_n^{\alpha_n}$, and we derive
a new factorization algorithm.
It is immediate from the definition that this  orbit is the set of polynomials
that can be factorized as in~(\ref{problem}) with linearly independent
forms.
Note that the orbit closure of the monomial $x_1x_2\ldots x_n$
is the set of polynomials that can be written as products of $n$ linear forms
(without any assumption of linear independence).
{% 
  This is well known in algebraic geometry, see example (5) in Section~3.1.2 of~\cite{landsbergGCT} and exercise 3.1.4.2 in the same book.}
Moreover, equations for this orbit closure are known,
see chapter 9 of~\cite{landsbergGCT} for a derivation of the equations
and the history of this subject.
However, no factorization algorithm relying on ideas from invariant theory
is currently known for arbitrary products of linear forms.
We suggest this problem as a natural step before considering more complicated orbit closure problems.

\subsection{Application to Waring decomposition} \label{waring}

The factorization problem studied here is motivated mainly by an algorithm
due to Neeraj Kayal (see Section~5 of~\cite{kayal11}).
Factorization in products of linear forms is also useful
for algorithmic differential algebra~\cite{SingerUlmer97,vanHoeij99} and for the resolution of systems of algebraic equations 
by factorization of the $U$-resultant~\cite{cox06,Kobayashi88}.

Kayal's algorithm determines whether a homogeneous polynomial $f$
of degree $d$ in $n$ variables
can be written as the sum of $n$ $d$-th powers of linearly
independent forms.
This algorithm is based on the fact that such a polynomial has a Hessian
determinant which factors as a product of $(d-2)$-th powers
of linearly independent forms. In~\cite{kayal11} the Hessian is factorized with
Kaltofen's algorithm
{  for the factorization of polynomials given by straight-line programs}~\cite{kaltofen89}. The decomposition
of $f$ as a sum of $d$-th powers can be recovered from this information.
The algorithm presented in this note can therefore be used instead of
Kaltofen's algorithm to solve the same  decomposition problem.

Building on these ideas from~\cite{kayal11}, it was recently shown
in~\cite{GKP18} how to recover up to $O(n^2)$ terms in a Waring decomposition\footnote{This algorithm works when the linear forms to be recovered are sufficiently generic; efficient reconstruction in the worst case is still open.}
(and more generally in a sum of powers of affine forms with possibly
different exponents in each power).
The algorithm works for polynomials of degree $d \geq 5$ and is based
on the factorization of a ``generalized Hessian'' into products of linear forms.
There are now up to order $n^2$ distinct linear forms in the factorization,
and that many linear forms must of course  be linearly dependent.
This provides further motivation for the problem suggested at the end of
Section~\ref{gct} (namely, the extension of our algorithm
to the case of linearly dependent forms).
Factorization in products of dependent forms is discussed
at the end of Section~\ref{comparison}.

\subsection{Comparison with previous factorization algorithms}
\label{comparison}

As mentioned above, the algorithm for Waring decomposition in~\cite{kayal11} relies on Kaltofen's factorization algorithm~\cite{kaltofen89} which works
in the arithmetic circuit (or ``straight-line program'') model:
the input polynomial $f$ is described by an arithmetic circuit, and the output is a list of arithmetic circuits for 
the irreducible factors of $f$ together with their multiplicities.

One could instead appeal to the black-box factorization algorithm by
Kaltofen and Trager~\cite{KalTra90}. 
In this case, instead of factorizing a circuit for the determinant
of a Hessian matrix one would use a black box for 
the determinant of this matrix.
The algorithm from~\cite{KalTra90} produces a black box for the irreducible factors of $f$ given
a black-box for evaluating $f$.

Compared to~\cite{KalTra90,kaltofen89} our algorithm works in a hybrid model: we use the most general of the two 
for the input polynomial (black box representation) but we explicitly determine the linear forms $l_i$ in~(\ref{problem})
when they exist.\footnote{It would anyway be easy to explicitly determine the $l_i$ by interpolation from a black box
for these linear forms.} For the general polynomial factorization problem, it is apparently not known how to efficiently produce "small" arithmetic circuits for the irreducible factors of a polynomial $f$ given a black-box for $f$. 
Due to the black box algorithm of~\cite{KalTra90}, this would be equivalent to producing a small arithmetic circuit for a polynomial
given a black box for this polynomial.

The algorithms from~\cite{KalTra90,kaltofen89}
project the original  $n$-variate factorization problem to a bivariate
factorization problem, solve the bivariate problem using a factorization
algorithm for polynomials in dense representation, and then lift the result
to a factorization of the $n$-variate input polynomial.
It will be clear that our algorithm is based on a very different principle:
instead of projecting we do linear algebra computations
directly in $n$-dimensional space.

There is an intringuing connection between our 
algorithm and Gao's algorithm
for the absolute factorization of bivariate polynomials~\cite{gao03}: they are both
based on the study of certains partial differential equations.
For the connection of our approach to PDEs see Lemma~\ref{lieP}
in Section~\ref{invback}.

As explained in Section~\ref{waring}, for the application
to Waring decomposition following~\cite{kayal11}
we can assume that the linear forms $l_i$ are independent.
This assumption does not seem so natural in other applications
such as differential algebra~\cite{SingerUlmer97,vanHoeij99}
or the resolution of systems of polynomial equations~\cite{cox06,Kobayashi88}.
For this reason,  we present in Section~\ref{bivariate}
another algorithm for factorization into
products of linear forms based like~\cite{KalTra90,kaltofen89}
on bivariate projections. Our goal in that section
is to give a simpler algorithm
which takes advantage of the fact that we are considering only a special
case of the polynomial factorization problem.
We present another simple algorithm in Section~\ref{hyperplane}.
{% 
  This  algorithm requires a univariate factorization algorithm, and the projection-based algorithm requires a bivariate factorization algorithm (see Sections~\ref{bivariate} and~\ref{hyperplane} for more details).}

For these last two algorithms, no assumption of linear independence is needed.
This is also the case for the algorithms in~\cite{Kobayashi88,vanHoeij99}.
In these two papers no complexity analysis is provided, and it is assumed
in the second one that the polynomial to be factorized is squarefree.
We note that the algorithm from~\cite{Kobayashi88} bears some
similarity to the algorithm that we present in Section~\ref{hyperplane}:
{  both are based on the determination of the zero set of
  the input polynomial, which is a union of hyperplanes.}

\subsection{On the choice of fields}

Polynomial factorization problems come with many variations.
In particular, the following choices need to be made:
\begin{itemize}
\item[(i)] The input is a polynomial $f \in K[x_1,\ldots,x_n]$.
  What field $K$ do we choose as field of coefficients for $f$?

\item[(ii)] What field $\mathbb{K}$ do we choose as field of coefficients
  for the output?
  More precisely, the output is a factorization $f=g_1 \ldots g_k$ where
  the polynomials $g_i$ belong to $\mathbb{K}[x_1,\ldots,x_n]$ for
  some field extension $\mathbb{K}$ of $K$,
  and are irreducible over $\mathbb{K}$.
  In the literature it is often
  (but not always) assumed that $K=\mathbb{K}$.

\item[(iii)] How do we represent the field elements? Assume for instance
  that $K=\qq$ and that we are interested in absolute factorization,
  i.e., factorization over $\mathbb{K} = \overline{\qq}$ (the algebraic closure
  of $\qq$). Do we insist on a symbolic representation for the coefficients
  of the $g_i$'s (in this case, the coefficients would be represented
  as elements of an extension of $\qq$ of finite degree) or, using an embedding
  $\overline{\qq} \subseteq \cc$, are we happy to compute only numerical
  approximations of these coefficients?
\end{itemize}
Absolute factorization
seems to be the most natural choice for this paper because of the application
to  Waring decomposition (this problem has been studied mostly in algebraically
closed fields\footnote{Some results are also known
  for the field of real numbers~\cite{carlini12,comon12}}).
%add:
Moreover, for any field $\mathbb{K}$
if a decomposition of $f$ of the form~(\ref{problem}) with the $l_i$
in $\mathbb{K}[x_1,\ldots,x_n]$ is possible then this decomposition
clearly is an absolute factorization of $f$.

  Nevertheless, we do not commit to any specific choice
for (i), (ii) and (iii) except that $K$ must be of characteristic zero.
This is possible because our main algorithm is a {\em reduction} (to matrix
diagonalization). Any (efficient) algorithm for this standard linear algebra
task for a specific choice of (i), (ii) and (iii)
will therefore yield an (efficient) factorization algorithm.
We elaborate on the complexity of our reduction in Section~\ref{complexity}.

\subsection{Complexity of our { invariant-theoretic} algorithm} \label{complexity}

The black box algorithm in~\cite{KalTra90} applies to polynomials with coefficients in a field $K$ of characteristic 0.
The only assumption on $K$ if that a factorization algorithm for univariate polynomials in $K[x]$ is available.
This black box algorithm can therefore be thought of as a reduction from multivariate to univariate polynomial factorization.
In order to evaluate precisely the complexity of this algorithm for a specific field $K$, one must of course take into account the complexity of the univariate factorization problem for this particular field.

Likewise, our main algorithm can be thought of as a reduction to (simultaneous) matrix diagonalization.\footnote{Note that diagonalizing a matrix is clearly related to the factorization of its characteristic polynomial.}
When we write that the algorithms of Section~\ref{factorization} run in polynomial time, we mean polynomial in $n$ (the number of variables of the input polynomial) and $d$ (its degree). In particular, the algorithm makes  $\mathrm{poly}(n,d)$ 
calls to the black box for $f$. It also performs simultaneous diagonalization on $n$ (commuting) matrices, and makes a few other 
auxiliary computations. The main one is the determination of the Lie algebra of $f$,  which as explained in Section~\ref{invback} is a linear algebra problem; a polynomial black box algorithm for it can be found in~\cite{kayal2012affine}.
{%
  A more precise analysis of our algorithm can be found in
  the appendix.
  It suggests that the computation of the Lie algebra of $f$ is
  a particularly expensive step.
  Improving the algorithm from~\cite{kayal2012affine} (or its analysis
  in the appendix) seems to be an interesting open problem.}

{% 
  If we just want to {\em decide} the existence of a suitable
  factorization (rather than compute it) our algorithm becomes purely algebraic,
  i.e., it just performs arithmetic operations (additions, multiplications and tests to zero) on the function values given by the black box for $f$.
In particular, we do not need to factor univariate polynomials or diagonalize matrices.}
  
Like in~\cite{kaltofen89,KalTra90} our algorithm is randomized and can return a wrong answer with a small probability $\epsilon$. This is unavoidable 
because  homogeneous polynomials of degree $d$ in $n$ variables have $\binom{n+d-1}{d}$ coefficients and this  is bigger than any fixed polynomial in $n$ and $d$ if these two parameters are  nonconstant. As a result, for a polynomial $f$ of form~(\ref{problem}) there will always be another polynomial $g$ which agrees with $f$ on all points queried on input $f$.
The algorithm will therefore erroneously\footnote{Indeed, the algorithm should report failure if $g$ is not of form~(\ref{problem}), or if it is should return a different factorization than for $f$.} output the same answer on these two inputs. The probability of error $\epsilon$  can be thought of as a small fixed constant, and as usual it can be made as small as desired by repeating the algorithm (or by changing the parameters in the algorithm from~\cite{kayal2012affine} for the 
computation of the Lie algebra; this is the main source of randomness in our algorithm\footnote{% 
   If $f$ is given explicitly as a sum of monomials,
  the Lie algebra can be computed deterministically in polynomial time; this is
  clear from the characterization of the Lie algebra in Lemma~\ref{lieP}.}).

\subsection{Organization of the paper}

In Section~\ref{background} we recall some background
on matrix diagonalization, simultaenous diagonalization
and invariant theory.
In Section~\ref{orbitsection} we give a characterization of the polynomials
in the orbit of a monomial.
We use this characterization in  Section~\ref{factorization} to derive
our main 
{% 
  algorithm for factorization into products of (independent) linear forms.}
An algorithm based on the older idea of bivariate projections is presented
in Section~\ref{bivariate}.
{  In contrast to~\cite{kaltofen89,KalTra90} this
algorithm recovers a factorization of the input polynomial from {\em several}
bivariate projections.}
  Another simple algorithm is presented in Section~\ref{hyperplane}.
  {  As mentioned earlier, this algorithm relies on the determination
    of the zero set of $f$.}
{% 
 Our last two algorithms do not rely on any invariant theory and do not require any independence property for the linear forms. As pointed out at the end of Section~1.1, for factorization into products of arbitrary linear forms no algorithm
that would rely on ideas from invariant theory is known at this time.}

{
  The paper ends with two appendices where we analyze the complexity of our
  three algorithms in more detail than in the main body. 
    In particular, we point out in Appendix~\ref{bb} an  optimization
  of our invariant-theoretic algorithm for the ``white box'' model,
  in which the black box for $f$ is implemented by an arithmetic circuit.}

\section{Background} \label{background}

We first recall the Schwarz-Zippel lemma~\cite{Schw,zippel},
a ubiquitous tool in the analysis
of randomized algorithms.
\begin{lemma}
  Let $f \in K[x_1,\ldots,x_n]$ be a nonzero polynomial.
  If $a_1,\ldots,a_n$ are drawn independently and uniformly at random
  from a finite set $S \subseteq K$ then
  $$\Pr[f(a_1,\ldots,a_n)=0] \leq \deg(f)/|S|.$$
\end{lemma}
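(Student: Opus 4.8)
The plan is to prove the bound by induction on the number $n$ of variables.

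\emph{Base case.} For $n=1$, a nonzero polynomial $f\in K[x_1]$ of degree $d=\deg(f)$ has at most $d$ roots in $K$, since $K[x_1]$ is an integral domain. Hence at most $d$ of the $|S|$ equally likely values of $a_1$ satisfy $f(a_1)=0$, which gives $\Pr[f(a_1)=0]\le d/|S|$.

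\emph{Inductive step.} Assume the statement for polynomials in $n-1$ variables, and let $f\in K[x_1,\ldots,x_n]$ be nonzero with $d=\deg(f)$. I would regard $f$ as a univariate polynomial in $x_n$ over the ring $K[x_1,\ldots,x_{n-1}]$ and write
$$f=\sum_{i=0}^{k} x_n^{\,i}\,g_i(x_1,\ldots,x_{n-1}),$$
choosing $k$ so that the top coefficient $g_k$ is not identically zero; then $\deg(g_k)\le d-k$. Applying the induction hypothesis to $g_k$ gives $\Pr[g_k(a_1,\ldots,a_{n-1})=0]\le (d-k)/|S|$. On the complementary event, the univariate polynomial $h(x_n):=f(a_1,\ldots,a_{n-1},x_n)$ is nonzero of degree exactly $k$; since $a_n$ is drawn independently of $a_1,\ldots,a_{n-1}$, the base case applied conditionally yields $\Pr[h(a_n)=0 \mid g_k(a_1,\ldots,a_{n-1})\ne 0]\le k/|S|$.

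To conclude, I would observe that $f(a_1,\ldots,a_n)=0$ forces either $g_k(a_1,\ldots,a_{n-1})=0$, or else $g_k(a_1,\ldots,a_{n-1})\ne 0$ and $h(a_n)=0$. A union bound then gives
$$\Pr[f(a_1,\ldots,a_n)=0]\le \frac{d-k}{|S|}+\frac{k}{|S|}=\frac{d}{|S|}.$$
The argument is entirely routine; the only steps requiring a moment's care are the choice of the leading $x_n$-coefficient $g_k$ (so that the reduction genuinely produces a nonzero $(n-1)$-variable polynomial and a univariate polynomial of degree exactly $k$ in $x_n$) and the use of the independence of $a_n$ from $a_1,\ldots,a_{n-1}$ to legitimize the conditional estimate. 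I do not anticipate any serious obstacle.
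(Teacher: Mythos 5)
The paper does not prove this lemma; it is stated as a background fact (the Schwartz--Zippel lemma) with citations to Schwartz and Zippel, so there is no in-paper proof to compare against. Your inductive argument is correct and is the standard one: peel off the leading coefficient $g_k$ with respect to $x_n$, bound $\Pr[g_k=0]$ by induction, and on the complementary event bound the probability that the nonzero degree-$k$ univariate specialization $h(x_n)$ vanishes; the conditional estimate is legitimate because for every fixed $(a_1,\ldots,a_{n-1})$ with $g_k\neq 0$ the polynomial $h$ has at most $k$ roots, and the independence of $a_n$ lets you average this over the conditioning event. The union bound then gives $(d-k)/|S| + k/|S| = d/|S|$. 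No gaps.
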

A similar result with a slightly worse bound was obtained a little earlier
by DeMillo and Lipton~\cite{demillo}.
In the remainder of this section we recall some background on
matrix diagonalization, on simultaenous diagonalization, on invariant theory
and Lie algebras.

\subsection{Background on matrix diagonalization} \label{diagback}

Since our main algorithm is a reduction to matrix diagonalization,
it is appropriate to provide some brief background on the algorithmic
solutions to this classical problem. After a first course on linear algebra, this might look like a simple task:
to diagonalize a matrix $M$, first compute its eigenvalues. Then, for each eigenvalue $\lambda$ compute a basis
of $\mathrm{Ker}(M-\lambda.I)$. But this problem is more subtle than it
{  seems at first sight}.

Let us begin with numerical algorithms. There is a
vast literature on numerical methods for eigenvalue problems
(see for instance~\cite{bjorck16} and the references there).
Naively, one might want to compute
the eigenvalues of  $M$ by computing the roots of its characteristic
polynomial $\chi_M(\lambda)=\det(M-\lambda I)$. This approach is hardly ever
used in practice for large matrices because the roots of a polynomial
can be very sensitive to perturbations of its coefficients~\cite{wilkinson84}.
A theoretical analysis explaining why such a bad behaviour is rather prevalent
can be found in~\cite{burgisser17}.
The QR algorithm is now considered to be the standard algorithm for computing
all eigenvalues and eigenvectors of a dense matrix~\cite{bjorck16}.
It works well in practice, but a thorough understanding of this algorithm
(or of {\em any} efficient and stable
numerical algorithm for the computation of eigenvalue -- eigenvector pairs)
is still lacking, see Open Problem~2 in~\cite{burgisser2013}.

Let us now turn to symbolic methods.
In the absence of roundoff errors, an approach based on the computation
of the characteristic polynomial becomes feasible
(see~\cite{pernet07} for
the state of the art on the computation of
this polynomial). From the knowledge of $\chi_M$ we can decide whether
$M$ is diagonalizable using the following classical result from linear algebra.
\begin{proposition} \label{diagmat}
  Let $K$ be a field of characteristic 0 and let 
  $\chi_M$ be the characteristic polynomial of a matrix $M \in M_n(K)$.
  Let $P_M = \chi_M / \mathrm{gcd}(\chi_M,\chi_M')$ be the squarefree part of
  $\chi_M$. The matrix $M$ is diagonalizable over $\overline{K}$ iff
  $P_M(M)=0$.\footnote{%
    An equivalent characterization is
    that the minimal polynomial of $M$ has only simple roots.}
    Moreover, in this case $M$ is diagonalizable over $K$ iff all the
  roots of $P_M$ lie in $K$.
\end{proposition}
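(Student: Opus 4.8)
The plan is to prove Proposition~\ref{diagmat} by reducing the diagonalizability question to a statement about the minimal polynomial of $M$, which then links back to $P_M$.

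First I would recall the basic dictionary between diagonalizability and the minimal polynomial. A matrix $M \in M_n(K)$ is diagonalizable over an extension field $L \supseteq K$ if and only if $L^n$ decomposes as a direct sum of eigenspaces of $M$, which by the primary decomposition theorem happens exactly when the minimal polynomial $\mu_M$ of $M$ splits into distinct linear factors over $L$. Two facts deserve emphasis here: (a) the minimal polynomial does not change under field extension (it is characterized as the monic generator of the annihilator ideal of $M$ in $K[X]$, and $\gcd$ computations are insensitive to the base field), and (b) $\mu_M$ and $\chi_M$ have the same irreducible factors (equivalently the same roots in $\overline{K}$), only the multiplicities differ. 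Fact (b) is the standard consequence of the structure theorem for $K[X]$-modules, or can be seen via the rational canonical form / companion matrix decomposition.

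Next I would connect $P_M$ to $\mu_M$. Since $K$ has characteristic zero, $\gcd(\chi_M,\chi_M')$ strips exactly one factor of each $(X-\lambda)$ appearing in $\chi_M$, so $P_M = \chi_M/\gcd(\chi_M,\chi_M')$ is the squarefree polynomial whose roots are precisely the distinct eigenvalues of $M$, each with multiplicity one; by fact (b) these are also exactly the distinct roots of $\mu_M$. Hence $P_M$ is the radical of $\mu_M$, and in particular $\mu_M \mid P_M^{\deg \mu_M}$ while $P_M \mid \mu_M^{?}$ is not quite right — the clean statement is: $\mu_M$ has only simple roots if and only if $\mu_M = P_M$ (up to the leading coefficient, both are monic). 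Now: $P_M(M)=0$ iff $\mu_M \mid P_M$; combined with the fact that $P_M$ and $\mu_M$ share all roots and $P_M$ is squarefree, $\mu_M \mid P_M$ forces $\mu_M = P_M$, i.e. $\mu_M$ is squarefree, i.e. (by the first paragraph) $M$ is diagonalizable over $\overline{K}$. Conversely if $M$ is diagonalizable over $\overline{K}$ then $\mu_M$ is squarefree, hence $\mu_M = P_M$ and $P_M(M)=\mu_M(M)=0$. For the last sentence: when $\mu_M=P_M$ is squarefree, $M$ is diagonalizable over $K$ itself iff $\mu_M$ splits over $K$ iff all roots of $P_M$ lie in $K$ — again using primary decomposition, now carried out over $K$, and noting that an eigenvalue with a $K$-rational eigenvector must itself be in $K$.

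The main obstacle, such as it is, is purely expository: one must be careful that the minimal polynomial (and $\gcd$s of polynomials) are unchanged by passing to $\overline{K}$, so that "$P_M(M)=0$" — an identity with entries in $K$ — is genuinely equivalent to the extension-field statement "$\mu_M$ splits with simple roots over $\overline{K}$". This is where characteristic zero (or more generally perfectness of $K$) is used: it guarantees $\gcd(\chi_M,\chi_M')$ behaves as expected and that $P_M$ is the honest radical of $\chi_M$. No deep input is needed beyond the primary decomposition theorem and the coincidence of the root sets of $\mu_M$ and $\chi_M$; the proof is essentially a bookkeeping exercise assembling these standard facts.
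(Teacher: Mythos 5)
The paper does not prove Proposition~\ref{diagmat}; it is presented as ``the following classical result from linear algebra'' with no argument supplied, so there is nothing in the paper to compare your proof against. Your proof is correct and is the standard textbook derivation: characterize diagonalizability over an extension field via the minimal polynomial $\mu_M$ being a product of distinct linear factors, observe that $\mu_M$ is unchanged by field extension and shares its irreducible factors (hence its roots over $\overline K$) with $\chi_M$, and use characteristic zero to identify $P_M=\chi_M/\gcd(\chi_M,\chi_M')$ with the radical of $\chi_M$, equivalently of $\mu_M$. From there the equivalence $P_M(M)=0 \Leftrightarrow \mu_M\mid P_M \Leftrightarrow \mu_M=P_M \Leftrightarrow \mu_M$ squarefree is exactly right, and the refinement to diagonalizability over $K$ itself follows by requiring $\mu_M=P_M$ to split over $K$. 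One small expository remark: the cleanest way to close the middle step is to note that $P_M\mid\mu_M$ always holds (the radical of $\mu_M$ divides $\mu_M$), so $\mu_M\mid P_M$ immediately forces equality; you gesture at this with the ``share all roots'' argument, which works but is a touch more roundabout than simply invoking $P_M\mid\mu_M$.
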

Once we know that $M$ is diagonalizable, computing the diagonal form of
$M$ symbolically requires the factorization of $P_M$.
We note that for $K=\qq$, finding the roots of $P_M$ in $\qq$
is cheaper than the general problem of factorization in irreducible
factors over $\qq[X]$~(\cite{aecf}, Proposition~21.22).
This faster algorithm should therefore be used to diagonalize over $\qq$.
For the purpose of this paper, this is relevant for  factorisation into  a product of linear forms with rational coefficients.

Once we know the eigenvalues of $M$ and their multiplicities, the last step
is the computation of a transition matrix $T$ such that $T^{-1}MT$ is diagonal.
For this step we refer to~\cite{giesbrecht94,giesbrecht95,villard97}.
These papers consider the more general problem
of computing symbolic representations of the Jordan normal form.

The knowledge of $T$ is particularly important for the application
to factorization into product of linear forms because (as shown in
Section~\ref{factorization}) these forms
can be read off directly from the transition matrix.
If we just want to know whether such a factorization is possible over $K$
or $\overline{K}$, Proposition~\ref{diagmat} is sufficient.

\subsection{Simultaneous diagonalization} \label{simulback}

It is a well known fact of linear algebra that a family of diagonalizable matrices
is simultaneously diagonalizable if and only if they pairwise commute. 
We will use this criterion to test whether a family of matrices $A_1,\ldots,A_k$ is simultaneously diagonalizable.
If the test succeeds, we will then need to diagonalize them. Note that a transition matrix which diagonalizes
$A_1$ may not necessarily diagonalize the other matrices (this may happen if $A_1$ has an eigenvalue of multiplicity 
larger than 1). We can nonetheless perform a simultaneous diagonalization by diagonalizing a single matrix. Indeed, as suggested in
Section~6.1.1 of~\cite{kayal2012affine} we can diagonalize a random linear combination of the $A_i$'s. We sketch a proof of this simple fact below.
For notational simplicity we consider only the case of two matrices.
The general case can be treated in a similar way.
\begin{lemma} \label{simulemma}
Assume that $M,N \in M_n(k)$ are two simultaneously diagonalizable matrices. There is a set $B \subseteq K$  of size
at most $n(n-1)/2$ such that for any $t \in K \setminus B$ any eigenvector of $M+tN$ is also an eigenvector of $M$ and~$N$.
\end{lemma}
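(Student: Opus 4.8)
The plan is to reduce everything to a statement about diagonal matrices via a simultaneous diagonalization. Since $M$ and $N$ are simultaneously diagonalizable, fix an invertible $P$ with $P^{-1}MP = D_1 = \diag(\mu_1,\dots,\mu_n)$ and $P^{-1}NP = D_2 = \diag(\nu_1,\dots,\nu_n)$. Then $P^{-1}(M+tN)P = \diag(\mu_1+t\nu_1,\dots,\mu_n+t\nu_n)$, and conjugation by $P$ carries eigenvectors of $M+tN$, $M$, $N$ to eigenvectors of $D_1+tD_2$, $D_1$, $D_2$ respectively (and back). So it suffices to prove the claim for these diagonal matrices. The basic observation I would use is that for a diagonal matrix $\diag(\lambda_1,\dots,\lambda_n)$, a nonzero vector $v$ is an eigenvector exactly when its support is contained in one level set $\{i : \lambda_i = \lambda\}$ for some $\lambda$.

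Next I would pin down the exceptional set $B$. The goal is to choose $t$ so that each level set of $D_1+tD_2$ is contained in a common level set of $D_1$ and of $D_2$, i.e. so that $\mu_i+t\nu_i = \mu_j+t\nu_j$ forces $(\mu_i,\nu_i)=(\mu_j,\nu_j)$. For a pair $i\neq j$ with $(\mu_i,\nu_i)\neq(\mu_j,\nu_j)$, the equation $\mu_i+t\nu_i=\mu_j+t\nu_j$ has no solution when $\nu_i=\nu_j$ (then $\mu_i\neq\mu_j$) and otherwise the unique solution $t=(\mu_j-\mu_i)/(\nu_i-\nu_j)$. Let $B$ be the set of these at most $\binom{n}{2}=n(n-1)/2$ values. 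For $t\notin B$, any equality $\mu_i+t\nu_i=\mu_j+t\nu_j$ indeed implies $(\mu_i,\nu_i)=(\mu_j,\nu_j)$.

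Finally I would conclude: for $t\notin B$, if $v$ is an eigenvector of $D_1+tD_2$ then its support lies in some level set $I=\{i:\mu_i+t\nu_i=\lambda\}$, on which all $(\mu_i,\nu_i)$ equal a common value $(\mu,\nu)$; hence $D_1 v=\mu v$ and $D_2 v=\nu v$, so $v$ is a common eigenvector of $D_1$ and $D_2$. Conjugating back by $P$ gives the statement for $M$ and $N$, and the case of $k\geq 2$ matrices is identical after replacing each pair of scalars by a $k$-tuple and $t$ by a generic coefficient vector (then $B$ becomes a union of proper subspaces, avoided generically). I do not anticipate a genuine obstacle; the only point requiring a little care is separating the "always coincident" pairs $(\mu_i,\nu_i)=(\mu_j,\nu_j)$ — which are harmless since $e_i$ and $e_j$ already lie in common eigenspaces of $D_1$ and $D_2$ — from the "accidentally coincident" ones, which are the only pairs contributing a value to $B$.
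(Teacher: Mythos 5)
Your proof is correct and follows essentially the same route as the paper: both reduce to the diagonal case via simultaneous diagonalization, define the exceptional set $B$ as the $t$'s for which $\mu_i+t\nu_i=\mu_j+t\nu_j$ with $(\mu_i,\nu_i)\neq(\mu_j,\nu_j)$, and conclude via the level-set/eigenspace-intersection observation.
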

\begin{proof}
Since $M$ and $N$ are simultaneously diagonalizable we may as well work in a basis where these matrices become diagonal.
We therefore assume without loss of generality that $M=\diag(\lambda_1,\ldots,\lambda_n)$ and 
$N=\diag(\mu_1,\ldots,\mu_n)$. We then have $M+tN=\diag(\lambda_1+t\mu_1,\ldots,\lambda_n+t\mu_n)$ for any $t \in K$. We may take for  $B$  the set of $t$'s such that $\lambda_i+t\mu_i = \lambda_j + t \mu_j$ for some pair $\{i,j\}$
such that $(\lambda_i,\mu_i) \neq (\lambda_j,\mu_j)$. This is indeed a set of size at most $n(n-1)/2$,
and for $t {\not \in} B$ the eigenspace of $M+tN$ associated to the eigenvalue $\lambda_i+t\mu_i$ is the intersection of the eigenspace of $M$ associated to $\lambda_i$ and of the eigenspace of $N$ associated to $\mu_i$.
In particular, any  eigenvector of $M+tN$ is also an eigenvector of $M$ and $N$.
\end{proof}

\begin{proposition}
  Assume that $M,N \in M_n(k)$ are two simultaneously diagonalizable matrices
  and that $t$ is drawn from the uniform distribution on a finite
  set $S \subset K$. With probability at least $1-\frac{n(n-1)}{2|S|}$,
  all the transition matrices which diagonalize $M+tN$ also diagonalize $M$
  and $N$.
\end{proposition}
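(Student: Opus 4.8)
The plan is to deduce the Proposition directly from Lemma~\ref{simulemma} by combining the deterministic bound on the ``bad set'' $B$ with the Schwarz--Zippel-type estimate (here just the one-variable case: a nonzero univariate polynomial of degree $\le n(n-1)/2$ vanishes at a random point of $S$ with probability $\le \frac{n(n-1)}{2|S|}$). So the real content is already in the lemma; the Proposition is its probabilistic packaging, and the only genuinely new point to check is that ``any eigenvector of $M+tN$ is an eigenvector of $M$ and $N$'' (the lemma's conclusion) actually implies ``any transition matrix diagonalizing $M+tN$ also diagonalizes $M$ and $N$'' (the Proposition's conclusion).

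First I would invoke Lemma~\ref{simulemma} to obtain the finite set $B \subseteq K$ with $|B| \le n(n-1)/2$ such that for every $t \in K \setminus B$, every eigenvector of $M+tN$ is simultaneously an eigenvector of $M$ and of $N$. Next I would observe that since $t$ is drawn uniformly from $S$, the event $t \in B$ has probability at most $|B|/|S| \le \frac{n(n-1)}{2|S|}$; equivalently, with probability at least $1-\frac{n(n-1)}{2|S|}$ we have $t \notin B$, and I will argue that on this event the desired conclusion holds. (If one prefers the Schwarz--Zippel phrasing, take the polynomial $\prod_{\{i,j\}} \bigl((\lambda_i-\lambda_j)+t(\mu_i-\mu_j)\bigr)$ over the pairs with $(\lambda_i,\mu_i)\neq(\lambda_j,\mu_j)$, which is nonzero of degree at most $n(n-1)/2$ and whose non-roots are exactly $K \setminus B$.)

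The step to spell out carefully is the passage from eigenvectors to transition matrices. Fix $t \notin B$ and let $T$ be any invertible matrix with $T^{-1}(M+tN)T$ diagonal. Then the columns $v_1,\ldots,v_n$ of $T$ form a basis of eigenvectors of $M+tN$; by the lemma each $v_k$ is an eigenvector of $M$ and of $N$, say $Mv_k=\alpha_k v_k$ and $Nv_k=\beta_k v_k$. Hence $T^{-1}MT=\diag(\alpha_1,\ldots,\alpha_n)$ and $T^{-1}NT=\diag(\beta_1,\ldots,\beta_n)$, i.e.\ $T$ diagonalizes both $M$ and $N$. This is the whole argument, and combined with the probability estimate above it gives the claim.

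The main obstacle, such as it is, is purely expository rather than mathematical: one must be slightly careful that the lemma is stated for \emph{eigenvectors} (individual vectors) while the Proposition asks about \emph{transition matrices}, and the bridge is simply that the columns of a diagonalizing transition matrix are an eigenbasis — so there is nothing deep, but the proof should make this explicit so that the reader sees the eigenvalue-multiplicity subtlety (a diagonalizer of $M+tN$ need not come from diagonalizers of $M$ and $N$ separately, yet it still works) is genuinely handled by choosing $t \notin B$. One could also remark that the same argument, with $B$ of size at most $\binom{n}{2}$ replaced by the union of analogous bad sets, extends verbatim to any finite family $A_1,\ldots,A_k$ of simultaneously diagonalizable matrices via a random linear combination $\sum_i t_i A_i$, which is the form actually used in Section~\ref{factorization}.
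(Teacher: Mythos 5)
Your proposal is correct and matches the paper's proof essentially step for step: both reduce to Lemma~\ref{simulemma}, estimate $\Pr[t \in B] \le |B|/|S|$, and then pass from eigenvectors to transition matrices by observing that the columns of a diagonalizing $T$ form an eigenbasis of $M+tN$.
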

\begin{proof}
  We show that the required property holds true for any $t$
  that does not belong to the ``bad set'' of Lemma~\ref{simulemma}.
  
  For an invertible matrix $T$, $T^{-1}(M+tN)T$ is diagonal iff
  all the column vectors of $T$ are eigenvectors of $M+tN$.
  But for $t {\not \in B}$, any eigenvector of $M+tN$ is also  an eigenvector
  of $M$ and $N$. As a result, if $T^{-1}(M+tN)T$ is diagonal then
  $T^{-1}MT$ and $T^{-1}NT$ are diagonal as well.
  \end{proof}

  \subsection{Background on invariants and Lie algebras} \label{invback}

  In this section and in the remainder of the paper, $K$ denotes a field
  of characteristic 0.
The general linear group $GL_n$ acts on the polynomial ring $K[x_1,\ldots,x_n]$
by linear change of variables: an invertible matrix $A \in GL_n$ sends
a polynomial $P(x) \in K[x_1,\ldots,x_n]$ to $P(A.x)$.
The group of invariant of $P$ is the group of matrices $A$ such that $P(A.x) = P(x)$.
We recall that this is a Lie group. Its Lie algebra $\mathfrak{g}$ is a linear subspace of
$M_n(K)$ defined as the tangent space of $G$ at identity.
More precisely, $\mathfrak{g}$ is the ``linear part'' of the tangent space;
the (affine) tangent space is $I+\mathfrak{g}$.

The Lie algebra associated to the group of invariants of $P$ will be
called simply ``Lie algebra of $P$'', and we will denote it by $\mathfrak{g}_P$.
It can be explicitly computed as follows.
\begin{lemma}[Claim 59 in~\cite{kayal2012affine}] \label{lieP}
  A matrix $A=(a_{ij}) \in M_n(K)$ belongs to the Lie algebra of $P$
  if and only if
  \begin{equation} \label{liePeq}
    \sum_{i,j \in [n]} a_{ij} x_j \frac{\partial P}{\partial x_i}=0
    \end{equation}
\end{lemma}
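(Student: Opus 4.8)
The plan is to differentiate the defining equation of the invariance group along a one-parameter curve through the identity. Recall that the group $G$ of invariants of $P$ consists of all $A \in GL_n(K)$ with $P(A.x) = P(x)$, and its Lie algebra $\mathfrak{g}_P$ is, by definition, the set of matrices $A$ for which $I + tA$ (or more precisely $\exp(tA)$, or any smooth curve $\gamma(t)$ with $\gamma(0)=I$, $\gamma'(0)=A$) stays in $G$ to first order in $t$. So I would start from the identity $P((I+tA).x) = P(x) + O(t^2)$, valid for $A \in \mathfrak{g}_P$, differentiate both sides with respect to $t$ at $t=0$, and apply the chain rule. Writing $y = (I+tA).x$, so that $y_i = x_i + t\sum_j a_{ij} x_j$ and $\frac{d}{dt}y_i\big|_{t=0} = \sum_j a_{ij} x_j$, the chain rule gives
\[
\frac{d}{dt}\Big|_{t=0} P((I+tA).x) = \sum_{i \in [n]} \frac{\partial P}{\partial x_i}(x) \cdot \sum_{j \in [n]} a_{ij} x_j = \sum_{i,j \in [n]} a_{ij}\, x_j\, \frac{\partial P}{\partial x_i}(x),
\]
while the right-hand side $P(x)$ has derivative $0$. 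This yields~(\ref{liePeq}) and establishes one direction: membership in $\mathfrak{g}_P$ implies the differential equation.

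For the converse I would argue that the differential equation~(\ref{liePeq}) is precisely the linearized invariance condition and that, because $P$ and its partial derivatives are polynomials (so everything in sight is a polynomial identity in $x$ and in the curve parameter), satisfying the first-order condition suffices. Concretely: given $A$ satisfying~(\ref{liePeq}), consider $\phi(t) = P(\exp(tA).x)$ as a formal power series in $t$ with polynomial coefficients in $x$. One checks that $\phi'(t) = \sum_{i,j} (\exp(tA))$-entries times partials evaluated at $\exp(tA).x$, which by the substitution $x \mapsto \exp(tA).x$ in~(\ref{liePeq}) (note $A$ also satisfies~(\ref{liePeq}) after this substitution since the equation is $GL_n$-equivariant in the appropriate sense — or more simply, $\frac{d}{dt}\phi(t)$ reduces to an expression of the same shape) vanishes identically; hence $\phi(t) = \phi(0) = P(x)$ for all $t$, so $\exp(tA) \in G$ for all $t$, and therefore $A$ lies in the tangent space $\mathfrak{g}_P$. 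Alternatively, and perhaps more cleanly, one invokes the standard fact from the theory of linear algebraic groups that the Lie algebra of the stabilizer of $P$ under a linear action is exactly the kernel of the induced action of $\mathfrak{gl}_n$ on $P$, i.e.\ $\{A : A \cdot P = 0\}$ where $A \cdot P = \sum_{i,j} a_{ij} x_j \partial P/\partial x_i$ is the derivative of the action; this is the infinitesimal version of the orbit–stabilizer correspondence.

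The main obstacle is making the converse direction rigorous without hand-waving about "the Lie algebra is the first-order part of the group." Over an arbitrary characteristic-zero field one cannot literally use the analytic exponential map or Taylor expansions with remainder; the cleanest route is to treat $\exp(tA)$ as a formal power series (which converges coefficientwise since we only ever evaluate polynomials) and verify $\frac{d}{dt} P(\exp(tA).x) = 0$ as a formal identity, using that $\frac{d}{dt}\exp(tA) = A\exp(tA)$ and that~(\ref{liePeq}) holds at the shifted point. Once this formal computation is in place, the claim follows immediately; the forward direction above is entirely routine. Since the lemma is quoted from~\cite{kayal2012affine} (Claim~59), it is also legitimate to simply cite that reference for the converse and present only the short differentiation argument for intuition.
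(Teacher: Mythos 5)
The paper does not actually prove this lemma; it simply cites Claim~59 of~\cite{kayal2012affine}, so there is no in-paper argument to compare against. Your proposed proof is correct. The forward direction (differentiating $P((I+tA).x)$ at $t=0$) is the standard chain-rule computation and is fine as written.

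Your concern about making the converse rigorous is somewhat overcautious. If one adopts the algebraic (dual-number) definition of the tangent space --- which is the standard one for linear algebraic groups over a field --- the equivalence is immediate: $\mathfrak{g}_P$ consists by definition of those $A$ for which $I+\epsilon A$ stabilizes $P$ over $K[\epsilon]/(\epsilon^2)$, and since $P$ is a polynomial and $\epsilon^2=0$ one has the \emph{exact} identity $P((I+\epsilon A).x) = P(x) + \epsilon\sum_{i,j} a_{ij} x_j \frac{\partial P}{\partial x_i}(x)$, with no remainder to control. Thus (\ref{liePeq}) is literally the tangent-space condition, giving both directions at once. Your alternative formal-$\exp(tA)$ argument is also valid and proves the (slightly stronger) fact that the whole formal one-parameter subgroup lies in $G$; the step you flag as potentially hand-wavy is justified simply because (\ref{liePeq}) is a polynomial identity in $x_1,\dots,x_n$ and therefore remains valid under the substitution $x_k \mapsto (\exp(tA).x)_k$ into any commutative $K$-algebra, here $K[[t]][x_1,\dots,x_n]$. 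Either route is acceptable; the dual-number formulation is the shortest fully rigorous one.
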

The elements of the Lie algebra therefore correspond to linear dependence
relations between the polynomials $x_j \frac{\partial P}{\partial x_i}$.

As an example we determine the group of invariants of monomials.
\begin{lemma} \label{moninvar}
  The group of invariants of a monomial $m=x_1^{\alpha_1}....x_n^{\alpha_n}$
  with $\alpha_i \geq 1$ for all $i$
  is generated by:
  \begin{itemize}
  \item[(i)] The diagonal matrices $\diag(\lambda_1,\ldots,\lambda_n)$
    with $\prod_{i=1}^n \lambda_i^{\alpha_i} =1$.
We denote this subgroup of $GL_n$ by $T_{\alpha}$, where $\alpha$  is the tuple $(\alpha_1,\ldots,\alpha_n)$.
    \item[(ii)] The permutation matrices which map any variable $x_i$ to a variable $x_{\pi(i)}$ with same exponent in $m$ (i.e., with $\alpha_i=\alpha_{\pi(i)}$).
    \end{itemize}
\end{lemma}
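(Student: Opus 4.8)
The plan is to show two inclusions: first, that every matrix of type (i) or (ii) genuinely fixes $m$ (the easy direction), and then the hard direction, that any $A \in GL_n$ with $m(A.x) = m(x)$ is a product of such matrices. I would set up the hard direction by first pinning down the Lie algebra $\mathfrak{g}_m$ using Lemma~\ref{lieP}, then using it to force the structure of $A$.

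\medskip

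\emph{Easy direction.} If $D = \diag(\lambda_1,\ldots,\lambda_n)$ with $\prod_i \lambda_i^{\alpha_i} = 1$, then $m(D.x) = \prod_i (\lambda_i x_i)^{\alpha_i} = \left(\prod_i \lambda_i^{\alpha_i}\right) m(x) = m(x)$. If $\pi$ is a permutation with $\alpha_{\pi(i)} = \alpha_i$ for all $i$ and $P_\pi$ the associated permutation matrix, then $m(P_\pi.x) = \prod_i x_{\pi(i)}^{\alpha_i} = \prod_j x_j^{\alpha_{\pi^{-1}(j)}} = \prod_j x_j^{\alpha_j} = m(x)$, using $\alpha_{\pi^{-1}(j)} = \alpha_j$. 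So the subgroup generated by (i) and (ii) is contained in the group of invariants.

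\medskip

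\emph{Hard direction.} First I would compute $\mathfrak{g}_m$. By Lemma~\ref{lieP}, $A = (a_{ij}) \in \mathfrak{g}_m$ iff $\sum_{i,j} a_{ij} x_j \frac{\partial m}{\partial x_i} = 0$. Since $\frac{\partial m}{\partial x_i} = \alpha_i m / x_i$, this reads $m \cdot \sum_{i,j} a_{ij} \alpha_i \frac{x_j}{x_i} = 0$, i.e. $\sum_{i,j} a_{ij}\alpha_i \frac{x_j}{x_i} = 0$ in the field of rational functions. The monomials $x_j/x_i$ for $i \ne j$ are distinct and distinct from $1$, so each off-diagonal coefficient must vanish: $a_{ij}\alpha_i = 0$, hence $a_{ij} = 0$ since $\alpha_i \geq 1$. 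The diagonal part contributes $\left(\sum_i a_{ii}\alpha_i\right) \cdot 1 = 0$. Therefore $\mathfrak{g}_m$ is exactly the space of diagonal matrices $\diag(c_1,\ldots,c_n)$ with $\sum_i \alpha_i c_i = 0$. This is precisely the Lie algebra of the torus $T_\alpha$ from (i), and it is an abelian Lie algebra of dimension $n-1$.

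Now let $A \in GL_n$ with $m(A.x) = m(x)$. The group of invariants $G_m$ contains $T_\alpha$, and the adjoint action of $A$ (conjugation) preserves $\mathfrak{g}_m$. The space $\mathfrak{g}_m$ contains, for instance, the matrix $E = \diag(e_1,\ldots,e_n)$ with distinct diagonal entries chosen so that $\sum \alpha_i e_i = 0$ (possible since $n \geq 2$; if $n = 1$ the statement is trivial as then $\alpha_1 = d$ and only scalars $\lambda$ with $\lambda^d = 1$ work, which is $T_\alpha$). Then $A E A^{-1} \in \mathfrak{g}_m$ is again diagonal, and has the same eigenvalues $e_1,\ldots,e_n$ as $E$, each with multiplicity one. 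Two diagonal matrices with the same multiset of simple eigenvalues are conjugate by a permutation matrix; more precisely, $A$ must send each eigenline $\langle f_i\rangle$ of $E$ to an eigenline of $AEA^{-1}$, which is some coordinate line $\langle f_{\sigma(i)}\rangle$. Hence $A = P_\sigma D$ for some permutation $\sigma$ and some diagonal $D$ (the columns of $A$ are scalar multiples of the standard basis vectors, permuted by $\sigma$). Writing $A = P_\sigma D$ and imposing $m(A.x) = m(x)$: we have $m(P_\sigma D . x) = m(P_\sigma . (Dx))$; applying $D$ first gives $\prod_i (\lambda_i x_i)^{\alpha_i}$ and then $P_\sigma$ permutes, yielding $\prod_i \lambda_{\sigma^{-1}(i)}^{\alpha_i} x_i^{\alpha_{\sigma^{-1}(i)}}$ up to bookkeeping. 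For this to equal $\prod_i x_i^{\alpha_i}$ as a polynomial we need the exponent of each $x_i$ to match, forcing $\alpha_{\sigma^{-1}(i)} = \alpha_i$ for all $i$ --- so $P_\sigma$ is a permutation of type (ii) --- and then the scalar factor $\prod_i \lambda_{\sigma^{-1}(i)}^{\alpha_i} = \prod_j \lambda_j^{\alpha_j}$ must equal $1$, so $D \in T_\alpha$. Hence $A = P_\sigma D$ with $P_\sigma$ of type (ii) and $D$ of type (i), which shows $G_m$ is generated by (i) and (ii).

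\medskip

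\emph{Main obstacle.} The routine bookkeeping of exponents in the last step (tracking how $\sigma$, $\sigma^{-1}$ and the $\alpha_i$ interact) is fiddly but not deep. The genuinely load-bearing step is the argument that $AEA^{-1}$ diagonal with simple spectrum forces $A$ to be a monomial matrix (permutation times diagonal); this is where one uses that $E$ was chosen with \emph{distinct} diagonal entries, which is exactly why the hypothesis $n \geq 2$ (equivalently, that $\mathfrak{g}_m$ contains a regular semisimple element) matters. An alternative, perhaps cleaner, route avoiding eigenvalue arguments is to work directly with the equation $\prod_i l_i(x)^{\alpha_i} = \prod_i x_i^{\alpha_i}$ where $l_i(x) = \sum_j a_{ij} x_j$: unique factorization in $K[x_1,\ldots,x_n]$ (with $K$ algebraically closed, or after passing to $\overline{K}$) shows each $l_i$ is a scalar multiple of some $x_{\tau(i)}$, and matching multiplicities then forces $\alpha_i = \alpha_{\tau(i)}$ and the product of scalars to be $1$; invertibility of $A$ makes $\tau$ a permutation. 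I would likely present this UFD argument as the main proof since it sidesteps the Lie-algebra computation, but the Lie-algebra viewpoint is worth recording since it is the one reused elsewhere in the paper.
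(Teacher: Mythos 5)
Your proposal is correct, and your ``alternative'' route at the end is in fact the paper's own proof: the paper simply observes that if $m(A.x)=m(x)$ then $\prod_i l_i(x)^{\alpha_i}=\prod_i x_i^{\alpha_i}$ where $l_i(x)=(A.x)_i$, and unique factorization forces each $l_i$ to be a scalar multiple of some $x_{\pi(i)}$; matching exponents and the constant then gives (ii) and (i). (Note you do not actually need to pass to $\overline{K}$ here: a nonzero linear form is already irreducible in $K[x_1,\ldots,x_n]$, so the UFD argument works over $K$ directly.) Your main argument via the Lie algebra is a genuinely different route and also correct: you recompute $\mathfrak{g}_m$ from Lemma~\ref{lieP} (which, since that lemma precedes Lemma~\ref{moninvar}, is logically available --- you are essentially proving Proposition~\ref{liemon} ahead of schedule), pick a regular element $E\in\mathfrak{g}_m$ with distinct diagonal entries, use the adjoint action (Proposition~\ref{lieconj} with $P=Q=m$) to show $AEA^{-1}\in\mathfrak{g}_m$ is again diagonal, and deduce $A$ is monomial. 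What the Lie-theoretic route buys is alignment with the machinery used in the rest of the paper, at the cost of more setup; the UFD route is shorter and self-contained, and is the one the paper uses. One small wobble in your write-up: in the final bookkeeping you write the exponent of $x_i$ in $m(P_\sigma D.x)$ as $\alpha_{\sigma^{-1}(i)}$, whereas with the convention $Af_i=\lambda_i f_{\sigma(i)}$ it comes out as $\alpha_{\sigma(i)}$. You flag this (``up to bookkeeping''), and since the two constraints $\alpha_i=\alpha_{\sigma(i)}$ and $\alpha_i=\alpha_{\sigma^{-1}(i)}$ for all $i$ are equivalent, the conclusion stands.
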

\begin{proof}
  The monomial is obviously invariant under the actions of matrices from
  (i) and (ii). Conversely, assume that $m$ is invariant under the action
  of an invertible matrix $A$. By uniqueness of factorization, $A$ must send every variable $x_i$ to the multiple of another variable, i.e., to
  $\lambda_ix_{\pi(i)}$. Moreover we must have $\alpha_i=\alpha_{\pi(i)}$ and
  $\prod_{i=1}^n \lambda_i^{\alpha_i} =1$, so $A$ is in the group generated by (i) and
  (ii).
\end{proof}
The Lie algebras of monomials is determined in Proposition~\ref{liemon}.
In this paper we will follow the Lie-algebraic approach
from~\cite{kayal2012affine}.
As a result we will not work directly with groups of invariants.

If two polynomials are equivalent under the action of $GL_n$, their Lie algebras are conjugate. More precisely:
\begin{proposition}[Proposition 58 in~\cite{kayal2012affine}] \label{lieconj}
  If $P(x)=Q(A.x)$ then $$\mathfrak{g}_P = A^{-1}.\mathfrak{g}_Q.A$$
\end{proposition}

\section{The orbit of a monomial} \label{orbitsection}

Throughout the paper, $m$ denotes a monomial
$x_1^{\alpha_1}....x_n^{\alpha_n}$  with all exponents $\alpha_i \geq 1$.
\begin{proposition} \label{liemon}
  The Lie algebra $\mathfrak{g}_m$ of a monomial $m=x_1^{\alpha_1} \cdots x_n^{\alpha_n}$ with all exponents $\alpha_i \geq 1$ is the space of diagonal matrices $\diag(\lambda_1,\ldots,\lambda_n)$
  such that $\sum_{i=1}^n \alpha_i \lambda_i =0$.
\end{proposition}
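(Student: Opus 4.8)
The plan is to apply the explicit description of the Lie algebra given by Lemma~\ref{lieP} directly to $P=m$. First I would compute the partial derivatives: for the monomial $m=x_1^{\alpha_1}\cdots x_n^{\alpha_n}$ one has $\frac{\partial m}{\partial x_i} = \alpha_i x_i^{-1} m$ (meaning $\alpha_i x_1^{\alpha_1}\cdots x_i^{\alpha_i-1}\cdots x_n^{\alpha_n}$; this makes sense since all $\alpha_i\geq 1$). Hence $x_j \frac{\partial m}{\partial x_i} = \alpha_i \frac{x_j}{x_i} m$. By Lemma~\ref{lieP}, a matrix $A=(a_{ij})$ lies in $\mathfrak{g}_m$ iff
\[
\sum_{i,j} a_{ij}\, \alpha_i \frac{x_j}{x_i}\, m = 0,
\]
equivalently (dividing by $m$, which is a unit in the field of rational functions) iff $\sum_{i,j} a_{ij}\alpha_i \frac{x_j}{x_i}=0$ as a rational function.

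The key step is then to observe that the rational functions $\frac{x_j}{x_i}$, as $(i,j)$ ranges over all pairs, are linearly independent over $K$ \emph{except} for the relation among the "diagonal" terms: $\frac{x_i}{x_i}=1$ for every $i$, so the only dependencies involve the constant functions $x_i/x_i$. Concretely, writing the identity $\sum_{i,j} a_{ij}\alpha_i \frac{x_j}{x_i}=0$, I would separate the off-diagonal terms ($i\neq j$) from the diagonal terms ($i=j$). The monomials $x_j/x_i$ with $i\neq j$ are pairwise distinct nonconstant Laurent monomials and are linearly independent from each other and from the constants; therefore each coefficient $a_{ij}\alpha_i$ with $i\neq j$ must vanish, and since $\alpha_i\geq 1$ this forces $a_{ij}=0$ for all $i\neq j$, i.e.\ $A$ is diagonal. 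The remaining constraint is that the constant term $\sum_i a_{ii}\alpha_i = 0$ vanishes. Setting $\lambda_i=a_{ii}$, this is exactly the condition $\sum_{i=1}^n \alpha_i\lambda_i = 0$, which proves the proposition.

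The only mild subtlety — and the step I would be most careful about — is justifying the linear independence of the Laurent monomials $\{x_j/x_i : i\neq j\}\cup\{1\}$ over $K$. This is standard (distinct monomials in the Laurent polynomial ring $K[x_1^{\pm 1},\ldots,x_n^{\pm 1}]$ form a $K$-basis, and clearing denominators by multiplying through by $x_1\cdots x_n$ reduces the claimed identity to a genuine polynomial identity among distinct monomials), so it does not require any real work, but it is worth stating cleanly rather than glossing over. As a sanity check, one notes that this space of diagonal matrices with $\sum\alpha_i\lambda_i=0$ is $(n-1)$-dimensional, consistent with it being the Lie algebra of the group $T_\alpha$ from Lemma~\ref{moninvar}(i), whose dimension is also $n-1$ (the permutation part of the invariance group from Lemma~\ref{moninvar}(ii) is finite and contributes nothing to the tangent space at the identity).
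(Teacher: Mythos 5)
Your proof is correct and follows essentially the same route as the paper's: both apply Lemma~\ref{lieP} to $m$, observe that the off-diagonal terms $x_j \partial m/\partial x_i$ produce distinct monomials that cannot cancel (forcing $a_{ij}=0$ for $i\neq j$), and then read off the single linear constraint $\sum_i \alpha_i\lambda_i=0$ from the diagonal terms. Your phrasing via Laurent monomials and dividing by $m$ is a slightly more explicit presentation of the same cancellation argument.
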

\begin{proof}
  By Lemma~\ref{lieP},  all these matrices are in $\mathfrak{g}_m$ since
  $m$ satisfies the equation $x_i \frac{\partial m}{\partial x_i} = \alpha_i m$.
  Conversely, if $A \in \mathfrak{g}$ all off-diagonal entries $a_{ij}$
  must vanish since the monomial $x_j \frac{\partial m}{\partial x_i}$
  could not cancel with any other monomial in~(\ref{liePeq}).
\end{proof}
\begin{remark} \label{liemongen}
  The above characterization of  $\mathfrak{g}_m$ is no longer true  if some exponents $\alpha_i$ may vanish. Indeed, in this case there is no constraint on the entries in row $i$ of a matrix in $\mathfrak{g}_m$.
  However, we note for later use that in all cases,
  the space of diagonal matrices $\diag(\lambda_1,\ldots,\lambda_n)$
  which lie in  $\mathfrak{g}_m$
is defined by  $\sum_{i=1}^n \alpha_i \lambda_i =0$.
\end{remark}
It is easy to check by a direct computation that the Lie algebra determined
in Proposition~\ref{liemon} is (as expected) equal to the tangent space at identity
of the group $T_{\alpha}$ from Lemma~\ref{moninvar}.
The next result turns Proposition~\ref{liemon} into an equivalence.
\begin{proposition} \label{moncharlie}
  Let $f \in K[x_1,\ldots,x_n]$ be a homogeneous polynomial of degree $d$.
  The two following properties are equivalent:
  \begin{itemize}
\item[(i)]  $f$ is a monomial which depends on all of its $n$ variables.
\item[(ii)] The Lie algebra of $f$ is an $(n-1)$-dimensional subspace of the space of diagonal matrices.
\end{itemize}
\end{proposition}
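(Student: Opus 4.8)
The plan is to prove the two implications separately. The direction (i) $\Rightarrow$ (ii) is essentially Proposition~\ref{liemon}: if $f = x_1^{\alpha_1}\cdots x_n^{\alpha_n}$ with all $\alpha_i \geq 1$, then that proposition tells us $\mathfrak{g}_f$ is exactly the space of diagonal matrices $\diag(\lambda_1,\ldots,\lambda_n)$ with $\sum_i \alpha_i\lambda_i = 0$, which is a hyperplane inside the $n$-dimensional space of diagonal matrices, hence $(n-1)$-dimensional. So this direction needs only a sentence.

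The substance is in (ii) $\Rightarrow$ (i). Here I would start from the characterization in Lemma~\ref{lieP}: a diagonal matrix $\diag(\lambda_1,\ldots,\lambda_n)$ lies in $\mathfrak{g}_f$ iff $\sum_{i=1}^n \lambda_i\, x_i\frac{\partial f}{\partial x_i} = 0$. The hypothesis is that the solution space of this linear system (in the $\lambda_i$) has dimension exactly $n-1$, i.e.\ it is a single hyperplane $\sum_i c_i\lambda_i = 0$ for some nonzero vector $(c_1,\ldots,c_n)$. The key step is to extract information about $f$ from this. Write $f = \sum_\beta a_\beta x^\beta$, where the sum is over exponent vectors $\beta$ of total degree $d$ with $a_\beta \neq 0$. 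Then $x_i \partial f/\partial x_i = \sum_\beta \beta_i a_\beta x^\beta$, so the condition $\sum_i \lambda_i x_i\partial f/\partial x_i = 0$ reads $\sum_\beta \big(\sum_i \lambda_i\beta_i\big) a_\beta x^\beta = 0$, i.e.\ for every monomial $x^\beta$ appearing in $f$ we need $\langle \lambda, \beta\rangle = 0$. Thus $\mathfrak{g}_f \cap (\text{diagonal matrices})$ is precisely the orthogonal complement of the span $V = \operatorname{span}\{\beta : a_\beta \neq 0\} \subseteq K^n$ of the exponent vectors occurring in $f$. The hypothesis says this orthogonal complement has dimension $n-1$, so $\dim V = 1$: all exponent vectors $\beta$ occurring in $f$ lie on a single line through the origin. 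Since they all have the same coordinate sum $d > 0$ (homogeneity), a line through the origin meets the hyperplane $\{\sum_i \beta_i = d\}$ in at most one point, so there is exactly one such $\beta$, i.e.\ $f = a_\beta x^\beta$ is a single monomial. Finally, $f$ must depend on all $n$ variables: if some $\beta_i = 0$, then row $i$ of any matrix in $\mathfrak{g}_f$ would be unconstrained (cf.\ Remark~\ref{liemongen}), so $\mathfrak{g}_f$ would contain the off-diagonal elementary matrices $E_{ij}$ for all $j$, contradicting the assumption that $\mathfrak{g}_f$ consists only of diagonal matrices.

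The one point requiring a little care — the main (minor) obstacle — is that the hypothesis (ii) concerns the \emph{full} Lie algebra $\mathfrak{g}_f$, while the argument above most naturally controls only the diagonal part of $\mathfrak{g}_f$. But (ii) explicitly states that $\mathfrak{g}_f$ is \emph{contained} in the space of diagonal matrices, so $\mathfrak{g}_f$ equals its diagonal part, and the dimension count applies directly; conversely, once we know $f$ is a genuine $n$-variable monomial, Proposition~\ref{liemon} confirms $\mathfrak{g}_f$ is indeed all-diagonal, so there is no inconsistency. One should also double-check that $d \geq 1$ (if $f$ is a nonzero constant then $\mathfrak{g}_f = M_n(K)$, which is not $(n-1)$-dimensional unless $n=1$ — but for $n = 1$ a degree-$0$ polynomial is excluded once we note the exponent-sum argument forces $d = \deg f$ and we are told $f$ is homogeneous of degree $d$ depending on its variable); this edge case is handled by the observation that $\dim\mathfrak{g}_f = n-1 < n^2$ rules out $f$ constant as soon as $n \geq 2$, and the statement is vacuous or trivial for $n = 1$.
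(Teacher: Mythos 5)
Your proof is correct and follows essentially the same route as the paper: both directions use Proposition~\ref{liemon} and Lemma~\ref{lieP}, and in (ii)~$\Rightarrow$~(i) both reduce to the observation that the diagonal part of $\mathfrak{g}_f$ cuts out the orthogonal complement of the span of the exponent vectors of $f$, with homogeneity then forcing a single monomial and the nondiagonal-entry argument forcing all exponents to be positive. The only cosmetic difference is that you phrase the dimension count as $\dim V = 1$ for $V = \operatorname{span}\{\beta\}$, while the paper phrases it dually as an intersection of two distinct hyperplanes in the diagonal matrices having dimension $n-2$; these are the same computation.
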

\begin{proof}
  We have seen in Proposition~\ref{liemon} that (i) implies (ii).
Conversely, for any polynomial $P$ let us denote by $\mathfrak{d}_P$ the subspace of its Lie algebra made of diagonal matrices.
By Lemma~\ref{lieP}, $\mathfrak{d}_f$ is the space of  of matrices $\diag(\lambda_1,\ldots,\lambda_n)$ 
such that 
\begin{equation} \label{diagf}
\sum_{i=1}^n \lambda_i x_i \frac{\partial f}{\partial x_i}=0
\end{equation}
 For any monomial $m$, $x_i \frac{\partial m}{\partial x_i}$ is proportional to $m$. This implies that $\mathfrak{d}_f$ is the intersection of the  $\mathfrak{d}_m$'s for the various
monomials $m$ appearing in $f$ since the contributions to~(\ref{diagf}) coming from different monomials cannot cancel.
By Remark~\ref{liemongen}, for two distinct monomials $m_1$ and $m_2$ appearing in $f$ the subspaces  $\mathfrak{d}_{m_1}$ and   $\mathfrak{d}_{m_2}$
are distinct since they are defined by linear forms that are not proportional
(here we use the homogeneity of $f$). It follows that their intersection is of dimension $n-2$ in contradiction with (ii). 
Therefore, only one monomial can appear in $f$.
Finally, by Remark~\ref{liemongen} all of the $n$ variables must appear
in this monomial; otherwise, $\mathfrak{g}_f$
would contain some nondiagonal matrices.
\end{proof}
We can now 
characterize the Lie algebras of polynomials in the orbit of a monomial.

\begin{theorem} \label{orbit}
  Consider a monomial $m=x_1^{\alpha_1} \cdots x_n^{\alpha_n}$
  with $\alpha_i \geq 1$ for all $i$, a homogeneous
  polynomial $f \in K[x_1,\ldots,x_n]$ of degree $d=\alpha_1+\cdots+\alpha_n$ and an invertible matrix~$A$.
  The two following properties are equivalent.
  \begin{itemize}
  \item[(i)] The action of $A$ sends $m$ to a multiple of $f$, i.e.,   $m(A.x)=c.f(x)$ for some constant $c$.

  \item[(ii)] The Lie algebras of $f$ and $m$ are conjugate by $A$, i.e.,
    $\mathfrak{g}_f = A^{-1}.\mathfrak{g}_m.A$.
   \end{itemize}
\end{theorem}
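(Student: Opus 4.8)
The plan is to establish the two implications separately, with the easy direction coming from the general theory already developed and the hard direction requiring the structural results about monomials.

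For (i) $\Rightarrow$ (ii): This is essentially immediate from Proposition~\ref{lieconj}. If $m(A.x) = c.f(x)$, then $f$ and $m$ differ by an invertible change of variables and a nonzero scalar. Since scaling a polynomial by a nonzero constant does not change its group of invariants (the equation $P(B.x) = P(x)$ is unaffected by replacing $P$ with $cP$), and hence does not change its Lie algebra, we have $\mathfrak{g}_f = \mathfrak{g}_{c.f}$. Applying Proposition~\ref{lieconj} to $c.f(x) = m(A.x)$ gives $\mathfrak{g}_{c.f} = A^{-1}.\mathfrak{g}_m.A$, so $\mathfrak{g}_f = A^{-1}.\mathfrak{g}_m.A$. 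I should remark explicitly on the scalar-invariance of the Lie algebra since Proposition~\ref{lieconj} as stated is for an exact equality $P(x) = Q(A.x)$.

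For (ii) $\Rightarrow$ (i): The idea is to transport the structure of $\mathfrak{g}_m$ through $A$ and apply Proposition~\ref{moncharlie}. Set $g(x) = f(A^{-1}.x)$, so that $f(x) = g(A.x)$. By Proposition~\ref{lieconj} (applied with $P = f$, $Q = g$), $\mathfrak{g}_f = A^{-1}.\mathfrak{g}_g.A$, hence the hypothesis $\mathfrak{g}_f = A^{-1}.\mathfrak{g}_m.A$ yields $\mathfrak{g}_g = \mathfrak{g}_m$. Now by Proposition~\ref{liemon}, $\mathfrak{g}_m$ is an $(n-1)$-dimensional subspace of the space of diagonal matrices (it is cut out by the single linear equation $\sum \alpha_i \lambda_i = 0$ in the $n$-dimensional space of diagonal matrices). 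Therefore $\mathfrak{g}_g$ is also an $(n-1)$-dimensional subspace of the diagonal matrices, and Proposition~\ref{moncharlie} applies: $g$ is a monomial depending on all $n$ variables, say $g = \beta.x_1^{\beta_1} \cdots x_n^{\beta_n}$ with all $\beta_i \geq 1$ and $\beta_1 + \cdots + \beta_n = \deg g = \deg f = d$.

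It remains to show $g$ is (a scalar multiple of) $m$ itself, i.e., that $\beta_i = \alpha_i$ for all $i$, and then to read off (i). This is where the one subtle point lies: knowing $\mathfrak{g}_g = \mathfrak{g}_m$ tells us, via Remark~\ref{liemongen} / Proposition~\ref{liemon}, that the linear forms $\sum \alpha_i \lambda_i$ and $\sum \beta_i \lambda_i$ define the same hyperplane, hence $(\beta_1, \ldots, \beta_n) = \mu (\alpha_1, \ldots, \alpha_n)$ for some nonzero scalar $\mu$; comparing the sums of exponents (both equal $d$) forces $\mu = 1$, so $\beta_i = \alpha_i$ for all $i$ and $g = \beta.m$ for a nonzero constant $\beta$. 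Then $f(x) = g(A.x) = \beta.m(A.x)$, which rearranges to $m(A.x) = \beta^{-1}.f(x)$, establishing (i) with $c = \beta^{-1}$. The main obstacle is not any single deep step but rather being careful with scalars throughout — making sure that "conjugate Lie algebras" combined with the rigidity of the monomial's Lie algebra pins down the monomial exactly, not just up to the ambiguity that the Lie algebra alone cannot see (which is precisely the scalar, harmlessly absorbed into $c$).
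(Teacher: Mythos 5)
Your proof is correct and follows essentially the same route as the paper's: use Proposition~\ref{lieconj} for (i)~$\Rightarrow$~(ii), and for (ii)~$\Rightarrow$~(i) define $g(x)=f(A^{-1}.x)$, deduce $\mathfrak{g}_g=\mathfrak{g}_m$, and invoke Propositions~\ref{liemon} and~\ref{moncharlie} to conclude $g$ is a scalar multiple of $m$. You are somewhat more explicit than the paper in two places where it compresses the argument — noting that the Lie algebra is unchanged by scaling the polynomial (so Proposition~\ref{lieconj} really does give (i)~$\Rightarrow$~(ii) despite the constant $c$), and spelling out why the exponent vector of $g$ must equal $(\alpha_1,\ldots,\alpha_n)$ exactly rather than merely being proportional to it (degree matching forces $\mu=1$) — but these are expansions of the paper's ``It follows from Propositions~\ref{liemon} and~\ref{moncharlie} that $g=\lambda.m$,'' not a different method.
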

\begin{proof}
  Proposition~\ref{lieconj} shows that (i) implies (ii).
  For the converse, assume that $\mathfrak{g}_f = A^{-1}.\mathfrak{g}_m.A$
  and define $g(x)=f(A^{-1}.x)$. By Proposition~\ref{lieconj} we have
  $\mathfrak{g}_g = \mathfrak{g}_m$.
It follows from Propositions~\ref{liemon} and~\ref{moncharlie}  
 that $g=\lambda.m$ for some nonzero constant $\lambda$. 
We therefore have $m(Ax)=f(x)/\lambda$.
 \end{proof}
This characterization takes a particularly simple form in the case
of equal exponents.
\begin{theorem} \label{equalexp}
  Consider a monomial $m=(x_1 \cdots x_n)^{\alpha}$ and a homogeneous
  polynomial $f \in K[x_1,\ldots,x_n]$ of degree $d=n\alpha$.
  The two following properties are equivalent.
  \begin{itemize}
  \item[(i)] Some multiple of $f$ belongs to the orbit of $m$, i.e., $m(A.x)=c.f(x)$ for some invertible matrix $A$ and some constant $c$.

  \item[(ii)] The Lie algebra of $f$ has a basis made of $n-1$
    diagonalizable matrices of trace zero which pairwise commute.
  \end{itemize}
  Moreover, $f$ is a constant multiple of $m$ if and only its Lie algebra
  is the space of diagonal matrices of trace zero.
\end{theorem}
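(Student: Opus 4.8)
The plan is to specialize Theorem~\ref{orbit} to the monomial $m=(x_1\cdots x_n)^\alpha$, for which all exponents equal $\alpha$. For this $m$, Proposition~\ref{liemon} identifies $\mathfrak{g}_m$ with the space $\mathfrak{t}_0=\{\diag(\lambda_1,\ldots,\lambda_n):\sum_{i=1}^n\lambda_i=0\}$ of trace-zero diagonal matrices, since the condition $\sum_i\alpha_i\lambda_i=0$ reads $\alpha\sum_i\lambda_i=0$. Note that $\mathfrak{t}_0$ is the kernel of a nonzero linear functional on the space of diagonal matrices, so $\dim\mathfrak{t}_0=n-1$, and the diagonal matrices in $\mathfrak{t}_0$ give an obvious basis consisting of pairwise commuting, diagonalizable, trace-zero matrices. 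Condition (ii) thus says exactly that $\mathfrak{g}_f$ has the same shape as $\mathfrak{g}_m$ up to the kind of change of basis a conjugation can realize, and the theorem is an unwinding of this observation combined with a dimension count.

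For $(i)\Rightarrow(ii)$: if $m(A.x)=c\,f(x)$, then Theorem~\ref{orbit} gives $\mathfrak{g}_f=A^{-1}\mathfrak{g}_m A=A^{-1}\mathfrak{t}_0 A$. Conjugation by $A$ is simultaneously a linear automorphism and a ring automorphism of $M_n(K)$, so it preserves dimension, sends commuting families to commuting families (it preserves commutators), sends diagonalizable matrices to diagonalizable matrices, and preserves trace. Applying it to a diagonal basis of $\mathfrak{t}_0$ therefore produces a basis of $\mathfrak{g}_f$ consisting of $n-1$ pairwise commuting diagonalizable matrices of trace zero, which is (ii).

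For $(ii)\Rightarrow(i)$: let $B_1,\ldots,B_{n-1}$ be a basis of $\mathfrak{g}_f$ as in (ii). By the criterion recalled in Section~\ref{simulback}, the $B_i$ are simultaneously diagonalizable, so there is an invertible $T$ with $T^{-1}B_iT$ diagonal for every $i$. Then $T^{-1}\mathfrak{g}_f T$ is spanned by the $T^{-1}B_iT$, hence is an $(n-1)$-dimensional space of diagonal matrices, all of which have trace zero (trace is conjugation-invariant and each $B_i$ has trace zero). Since $\dim\mathfrak{t}_0=n-1$ as well, we conclude $T^{-1}\mathfrak{g}_f T=\mathfrak{t}_0=\mathfrak{g}_m$. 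Setting $A=T^{-1}$, so that $\mathfrak{g}_f=A^{-1}\mathfrak{g}_m A$, Theorem~\ref{orbit} applied to $m$ (whose exponents sum to $n\alpha=d$) yields $m(A.x)=c\,f(x)$, i.e., (i). The one delicate point here is the field over which $T$ lives: the simultaneous diagonalization of the $B_i$, and hence $A$, may only be available over $\overline K$ rather than over $K$. In keeping with the discussion of the choice of fields earlier in the paper this causes no real difficulty, but it is the step where one must be careful; everything else is a routine transfer of properties along a conjugation together with the equality of dimensions.

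For the ``moreover'' statement: if $f=c\,m$ then $\mathfrak{g}_f=\mathfrak{g}_m=\mathfrak{t}_0$, because rescaling a polynomial does not change the linear relations in~(\ref{liePeq}) and hence does not change its Lie algebra. Conversely, if $\mathfrak{g}_f=\mathfrak{t}_0$ then in particular $\mathfrak{g}_f$ is an $(n-1)$-dimensional subspace of the diagonal matrices, so Proposition~\ref{moncharlie} forces $f=c\,x_1^{\beta_1}\cdots x_n^{\beta_n}$ with all $\beta_i\geq 1$; Proposition~\ref{liemon} then describes $\mathfrak{g}_f$ as $\{\diag(\lambda_1,\ldots,\lambda_n):\sum_i\beta_i\lambda_i=0\}$, and equality with $\mathfrak{t}_0$ forces the linear form $\sum_i\beta_i\lambda_i$ to be proportional to $\sum_i\lambda_i$, i.e., all $\beta_i$ equal to a common value $\beta$. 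Homogeneity of degree $n\alpha$ then gives $\beta=\alpha$, so $f=c\,m$.
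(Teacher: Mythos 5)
Your proof is correct and takes essentially the same route as the paper: $(i)\Rightarrow(ii)$ by transporting the diagonal basis of $\mathfrak{g}_m=\mathfrak{t}_0$ through the conjugation given by Proposition~\ref{lieconj}, and $(ii)\Rightarrow(i)$ by simultaneously diagonalizing the basis, matching dimensions to conclude $T^{-1}\mathfrak{g}_fT=\mathfrak{g}_m$, and invoking Theorem~\ref{orbit}. The only cosmetic difference is in the ``moreover'' direction: you re-derive it from Propositions~\ref{moncharlie} and~\ref{liemon}, whereas the paper simply applies Theorem~\ref{orbit} with $A=\mathrm{Id}$; the two are equivalent since the proof of Theorem~\ref{orbit} already routes through Proposition~\ref{moncharlie}. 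Your caveat about $T$ possibly living only over $\overline K$ is well taken and consistent with how the paper treats the choice of field.
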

\begin{proof}
Let $f$ be in the orbit of $m$. By Proposition~\ref{lieconj}, in order to establish~(ii) for $f$ we just need to check that this property is true for $m$. This is the case since (by Proposition~\ref{liemon})  the Lie algebra of $m$ is the space of diagonal matrices of trace 0.

Conversely, assume that (ii) holds for $f$. It is a well known fact of linear algebra that a family of diagonalizable matrices
is simultaneously diagonalizable if and only if they pairwise commute. By simultaneously diagonalizing the $n-1$ matrices in the basis of $\mathfrak{g}_{f}$ we find that this Lie algebra is conjugate to $\mathfrak{g}_m$ (which as we just saw  is the space of diagonal matrices of trace 0).  Hence some constant multiple of $f$ is in the orbit of $m$ by Theorem~\ref{orbit}.

As to the second part of the theorem, we have already seen that 
$\mathfrak{g}_m$ is the space of diagonal matrices of trace zero.
Conversely, if $\mathfrak{g}_f = \mathfrak{g}_m$ we can apply
Theorem~\ref{orbit} with $A = \mathrm{Id}$ and it follows that $f$ is
a constant multiple of~$m$.
\end{proof}
Note that if (ii) holds for some basis of  $\mathfrak{g}_{f}$ this property holds for all bases.
Also, if $K$ is algebraically closed we can always take $c=1$ in Theorems~\ref{orbit} and~\ref{equalexp}.

\section{Factorization into products of independent forms} \label{factorization}

By definition, the orbit of a monomial $m=x_1^{\alpha_1} \cdots x_n^{\alpha_n}$ contains the polynomial $f$ if an only if $f$ can be written as $f(x)=l_1(x)^{\alpha_1} \cdots l_n(x)^{\alpha_n}$ where the linear forms $l_i$ are linearly independent.
We will exploit the characterization of orbits obtained in Section~\ref{orbitsection} to factor such polynomials.
We assume that we have access to a black-box for $f$.
We begin with the simpler case of equal exponents.
Note that this is exactly what is needed in Section~5 of~\cite{kayal11}.

\subsection{Equal exponents} \label{equal}

In this section we describe an algorithm which takes as input a homogeneous polynomial $f \in K[x_1,\ldots,x_n]$ of degree $d=n\alpha$, determines if it can be expressed as  $f = (l_1 \cdots  l_n)^{\alpha}$ where the $l_i$'s are linearly independent forms
and finds such a factorization if it exists.
In the first three steps of the following algorithm
we decide whether such a factorization
exists over $\overline{K}$, 
and in the last two we actually compute the
factorization.
\begin{enumerate}
\item Compute a basis $B_1,\ldots,B_k$ of the  Lie algebra of $f$.
  
\item Reject if $k \neq n-1$, i.e.,
  if the Lie algebra is not of dimension $n-1$.

\item Check that the matrices $B_1,\ldots,B_{n-1}$ commute,
  are all diagonalizable over $\overline{K}$ 
and of trace zero.
  If this is the case, declare that $f$ can be factored as
  $f = (l_1 \cdots  l_n)^{\alpha}$ 
 where the $l_i$'s are
  linearly independent forms.
  Otherwise, reject.

\item Perform a simultaneous diagonalization of the $B_i$'s, i.e.,
  find an invertible matrix $A$ such that the $n-1$ matrices
  $AB_iA^{-1}$ are diagonal.

\item At the previous step we have found a matrix $A$ such that
  $f(A^{-1}x) = \lambda.m(x)$ where $m$ is the monomial  $(x_1 \cdots  x_n)^{\alpha}$. We therefore have $f(x)=\lambda.m(Ax)$ and we output this factorization.
\end{enumerate}
Note that this algorithm outputs a factorization of the form  $f = \lambda.(l_1 \cdots l_n)^{\alpha}$. We can of course obtain $\lambda=1$ by an appropriate scaling of the $l_i$'s if desired.
\begin{theorem}
  The above algorithm runs in polynomial time and determines whether $f$ can be written as  $f = 
  (l_1 \cdots  l_n)^{\alpha}$ where 
  the forms $l_i$ are linearly independent. It ouputs such a factorization 
if there is one.
\end{theorem}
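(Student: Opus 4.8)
The plan is to verify that each of the five steps is correct and runs in polynomial time, relying essentially on Theorem~\ref{equalexp} for correctness and on the results collected in Section~\ref{background} for the complexity of the linear-algebra subroutines. The overall structure of the argument mirrors the structure of the algorithm itself: first argue that Steps 1--3 correctly decide, over $\overline{K}$, whether $f$ lies in the orbit of $m=(x_1\cdots x_n)^\alpha$ up to a scalar, and then argue that Steps 4--5 produce an explicit factorization.

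\textbf{Correctness.} For the decision part, by Theorem~\ref{equalexp} some constant multiple of $f$ belongs to the orbit of $m$ if and only if $\mathfrak{g}_f$ has a basis consisting of $n-1$ pairwise-commuting, trace-zero, diagonalizable matrices. Step~1 computes \emph{some} basis $B_1,\dots,B_k$ of $\mathfrak{g}_f$; since being diagonalizable, commuting, and having trace zero are properties invariant under change of basis of a commuting family (a point the paper flags right after the proof of Theorem~\ref{equalexp}), it suffices to check condition (ii) on the computed basis. Step~2 checks $k=n-1$: if $\dim\mathfrak{g}_f\neq n-1$ then condition (ii) fails and we correctly reject. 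Step~3 checks the three remaining properties of $B_1,\dots,B_{n-1}$. Commutativity is a finite set of matrix-identity tests; trace zero is immediate; diagonalizability over $\overline K$ is decided by Proposition~\ref{diagmat} (compute the characteristic polynomial of each $B_i$, its squarefree part $P_{B_i}$, and test $P_{B_i}(B_i)=0$), and this requires no root-finding. Hence Steps~1--3 output ``factorable'' exactly when condition (ii) of Theorem~\ref{equalexp} holds, i.e.\ exactly when $f=c\cdot(l_1\cdots l_n)^\alpha$ for linearly independent forms $l_i$ over $\overline K$.

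\textbf{Producing the factorization.} If Step~3 succeeds, the $B_i$ form a commuting family of diagonalizable matrices, hence are simultaneously diagonalizable; Step~4 produces $A$ with all $AB_iA^{-1}$ diagonal, e.g.\ by diagonalizing a random linear combination $\sum t_i B_i$ as justified by Lemma~\ref{simulemma} and the Proposition following it, then reading off the transition matrix as in Section~\ref{diagback}. After conjugation by $A$ the Lie algebra $\mathfrak{g}_{f(A^{-1}x)}=A\,\mathfrak{g}_f\,A^{-1}$ is a space of diagonal matrices of dimension $n-1$; being contained in the trace-zero diagonal matrices (also $(n-1)$-dimensional), it equals it. By the second part of Theorem~\ref{equalexp} (applied to $g(x)=f(A^{-1}x)$, whose Lie algebra is the trace-zero diagonal matrices), $g$ is a constant multiple of $m$, so $f(A^{-1}x)=\lambda\,m(x)$ and thus $f(x)=\lambda\,m(Ax)=\lambda\prod_i l_i(x)^\alpha$ where $l_i$ is the $i$-th coordinate form of $Ax$; these are linearly independent because $A$ is invertible. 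This is what Step~5 outputs.

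\textbf{Complexity.} Each step is polynomial in $n$ and $d$: computing $\mathfrak{g}_f$ reduces, via Lemma~\ref{lieP}, to finding linear dependencies among the $n^2$ polynomials $x_j\,\partial f/\partial x_i$, which can be done by a polynomial black-box algorithm (the one of~\cite{kayal2012affine}, as discussed in Section~\ref{complexity}); the commutativity, trace, and diagonalizability checks in Step~3 are straightforward matrix computations on matrices of size $n$; and the simultaneous diagonalization in Step~4 amounts to one matrix diagonalization (Section~\ref{diagback}). The only subtlety, and what I expect to be the main point requiring care rather than the main obstacle, is bookkeeping the randomized and field-dependent aspects: the Lie-algebra computation of~\cite{kayal2012affine} is randomized and may err with small probability $\epsilon$ (as the paper already explains is unavoidable in the black-box model), and the actual diagonalization in Steps~4--5 requires a diagonalization routine for the chosen field $K$ (or $\overline K$) — but the decision part, Steps~1--3, is purely arithmetic and needs no root-finding. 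With these caveats, the algorithm is correct and runs in $\mathrm{poly}(n,d)$ time, which proves the theorem.
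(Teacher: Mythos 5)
Your proof is correct and follows essentially the same route as the paper's: correctness of Steps 1--3 via the equivalence in Theorem~\ref{equalexp} (together with the paper's remark that condition~(ii) holding for one basis of $\mathfrak{g}_f$ implies it holds for all), correctness of Steps 4--5 via Proposition~\ref{lieconj} and the second part of Theorem~\ref{equalexp}, and polynomial running time from the referenced subroutines. You spell out slightly more explicitly why $A\,\mathfrak{g}_f\,A^{-1}$ is exactly the trace-zero diagonal matrices (dimension count), but the substance and structure of the argument match the paper's proof.
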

\begin{proof}
  The correctness of the algorithm follows from Theorem~\ref{equalexp}.
  In particular, the equivalence of properties (i) and (ii)
  in Theorem~\ref{equalexp} shows that the algorithm will make a
  correct decision on the existence of a suitable factorization at step~3.
  If this step succeeds, the simultaneous diagonalization at step~4 is possible
  since (as already pointed out Section~\ref{simulback} and
  in the proof of Theorem~\ref{equalexp})
  simultaneous diagonalization is always possible for a
  family of matrices which are diagonalizable and pairwise commute.
  By Proposition~\ref{lieconj}, the Lie algebra of $f(A^{-1}x)$
  is the space of diagonal matrices of trace 0.
  This implies that $f(A^{-1}x)$ is a constant multiple of $m$ by the
  last part of Theorem~\ref{equalexp}, and justifies the last step of
  the algorithm.
  
Let us now explain how to implement the 5  steps.
A randomized\footnote{There is no need for randomization if $f$ is given explicitly as a sum of monomials rather than by a black box (in this case we can directly solve the linear system from Lemma~\ref{lieP}).} black box algorithm for Step~1 based on Lemma~\ref{lieP} can be found in Lemma~22 of~\cite{kayal2012affine}.
Steps 2 and 3 are mostly routine  (use Proposition~\ref{diagmat} to check that
the $B_i$'s are diagonalizable).
Step 4 (simultaenous diagonalization of commuting matrices)  is also a standard linear alegbra computation. One suggestion from Section~6.1.1 of~\cite{kayal2012affine} is to diagonalize a random linear combination of the $B_i$'s (see Section~\ref{simulback} for more details).
That matrix can be diagonalized as explained in Section~\ref{diagback}.
Finally, the scaling factor $\lambda$ at step 5 can be computed by one call to the black box for $f$.
\end{proof}

\begin{remark} \label{smallfield}
We have presented the above algorithm with a view towards factorisation over $\overline{K}$, but it is readily adapted
to factorization over some intermediate field $K \subseteq \mathbb{K} \subseteq \overline{K}$. Note in particular that to decide
the existence of a factorization at step 3, we would need to check that the matrices $B_i$ are diagonalizable over $\mathbb{K}$.
As recalled in Proposition~\ref{diagmat}, this requires an algorithm that decides whether the characteristic polynomial of a matrix 
has all its roots in~$\mathbb{K}$.
{% 
  In the case $\mathbb{K}= \overline{K}$, if we stop at step~3
  we obtain a purely algebraic algorithm
  for deciding the existence of a suitable
  factorization (in particular,
  we do not need to factorize univariate polynomials or diagonalize matrices).}
\end{remark}

\subsection{General case} \label{general}

In this section we describe an algorithm which takes as input a homogeneous polynomial $f$ of degree $d=\alpha_1+\cdots+\alpha_n$ in $n$ variables, determines if it can be expressed as   $f(x)=l_1(x)^{\alpha_1} \cdots l_n(x)^{\alpha_n}$ where the $l_i$'s are linearly independent forms, and  finds such a factorization if it exists. Note that the values of the exponents $\alpha_i$ are determined
by the algorithm (they are not given as input).
We assume that $\alpha_i \geq 1$ for all $i$.
The number of distinct factors is therefore equal to the number of variables of $f$.
The case where there are more factors than variables  is related to orbit closure and we do not treat it in this section.
Let us  explain briefly explain how to handle the case where some exponents $\alpha_i$ may be 0, i.e., the case where the
number $r$ of distinct factors is smaller than the number of variables. 
In this case, $f$ has only $r$ "essential variables", i.e., it is possible to make a linear (invertible) change of variables
after which $f$ depends only on $r$ variables.
This puts us therefore in the situation where the number of distinct factors is equal to the number of variables.
The number of essential variables and the corresponding change of variables can be computed with Kayal's algorithm\footnote{The algorithm in~\cite{kayal11}
  works in the circuit model, i.e., it is assumed that the input polynomial
  is given by an arithmetic circuit. Kayal later showed how to perform
  the same task in the black box model, see Section~3
  of~\cite{kayal2012affine}.}~\cite{kayal11},
see also~\cite{carlini06}.

We can now present our factorization algorithm. Like in the case of
equal exponents, the existence of a suitable factorization is decided in the
first three steps.
\begin{enumerate}
\item Compute a basis $B_1,\ldots,B_k$ of the  Lie algebra of $f$.
\item Reject if $k \neq n-1$, i.e., if the Lie algebra is not of dimension $n-1$.

\item Check that the matrices $B_1,\ldots,B_{n-1}$ commute and are 
  all diagonalizable over $\overline{K}$. 
  If this is not the case, reject.
  Otherwise, declare the existence of a factorization
  $f(x)=l_1(x)^{\alpha_1} \cdots l_n(x)^{\alpha_n}$
  where the linear forms $l_i$ are linearly independent
  and $\alpha_i \geq 1$ (the $l_i$ and $\alpha_i$ will be determined
  in the last 3 steps of the algorithm).

  \item Perform a simultaneous diagonalization of the $B_i$'s, i.e.,
  find an invertible matrix $A$ such that the $n-1$ matrices
  $AB_iA^{-1}$ are diagonal.

\item At the previous step we have found a matrix~$A$ such that $g(x)=f(A^{-1}x)$ has a Lie algebra $\mathfrak{g}_g$ which is an $(n-1)$-dimensional
  subspace of the space of diagonal matrices.
  Then we compute the orthogonal of $\mathfrak{g}_g$, i.e., 
  we find a vector $\alpha=(\alpha_1,\dots,\alpha_n)$ such $\mathfrak{g}_g$
  is the space of matrices $\diag(\lambda_1,\ldots,\lambda_n)$
  satisfying $\sum_{i=1}^n \alpha_i \lambda_i = 0$.
  We normalize $\alpha$ so that $\sum_{i=1}^n \alpha_i = d$.

\item   We must have $g(x)=\lambda.m$ where $\lambda \in K^*$ 
  and $m$ is the monomial $x_1^{\alpha_1} \cdots x_n^{\alpha_n}$ (in particular, $\alpha$ must be a vector with integral entries).
  We therefore have $f(x)=\lambda.m(Ax)$ and we output this factorization.
\end{enumerate}
Again,  this algorithm outputs a factorization of the form  $f(x) = \lambda.l_1(x)^{\alpha_1} \cdots l_n(x)^{\alpha_n}$ and we can obtain $\lambda=1$ by an appropriate scaling of the $l_i$'s.
\begin{theorem} \label{general_th}
  The above algorithm runs in polynomial time and determines whether $f$ can be written as 
  $f(x)=l_1(x)^{\alpha_1} \cdots l_n(x)^{\alpha_n}$ where 
  the forms $l_i$ are linearly independent and $\alpha_i \geq 1$ for all $i$. It ouputs such a factorization 
if there is one.
\end{theorem}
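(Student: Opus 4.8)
The plan is to mirror the structure of the equal-exponent proof (Theorem in Section~\ref{equal}), but with two additional bookkeeping steps to recover the exponents $\alpha_i$, which are no longer known a priori. The correctness argument splits into a decision part (steps 1--3) and a reconstruction part (steps 4--7). For the decision part, I would invoke Theorem~\ref{orbit} and Proposition~\ref{moncharlie}: if $f$ factors as $l_1^{\alpha_1}\cdots l_n^{\alpha_n}$ with the $l_i$ independent, then $f=m(A.x)$ for the monomial $m=x_1^{\alpha_1}\cdots x_n^{\alpha_n}$ and some invertible $A$, so by Proposition~\ref{lieconj} $\mathfrak g_f=A^{-1}\mathfrak g_m A$; since $\mathfrak g_m$ is by Proposition~\ref{liemon} the $(n-1)$-dimensional space of diagonal matrices with $\sum\alpha_i\lambda_i=0$, and conjugation preserves dimension, commutativity, and diagonalizability, property (ii)-style conditions of step~3 hold. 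Conversely, if $B_1,\dots,B_{n-1}$ are commuting, diagonalizable, and span $\mathfrak g_f$ with $\dim=n-1$, then they are simultaneously diagonalizable (the standard linear-algebra fact), so $A\mathfrak g_f A^{-1}$ is an $(n-1)$-dimensional space of diagonal matrices for a suitable $A$; applying Proposition~\ref{moncharlie} to $g(x)=f(A^{-1}.x)$ (whose Lie algebra is $A\mathfrak g_f A^{-1}$ by Proposition~\ref{lieconj}) forces $g$ to be a monomial in all $n$ variables, hence $f$ is in the orbit of that monomial. This gives the correctness of the declaration in step~3.

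For the reconstruction part, after step~4 we have $g(x)=f(A^{-1}.x)$ with $\mathfrak g_g=A\mathfrak g_f A^{-1}$ an $(n-1)$-dimensional subspace of the diagonal matrices. By Remark~\ref{liemongen} (or directly by the computation in Proposition~\ref{moncharlie}), such a subspace of the diagonal matrices is cut out by a single linear equation $\sum_i \alpha_i\lambda_i=0$, and the vector $\alpha$ is determined up to scaling; step~5 extracts this $\alpha$ and step~6 normalizes it so $\sum\alpha_i=d$. I would then argue that $g$ must equal $\lambda\cdot x_1^{\alpha_1}\cdots x_n^{\alpha_n}$: indeed $g$ is a monomial in all $n$ variables by Proposition~\ref{moncharlie}, say $g=\lambda\cdot x_1^{\beta_1}\cdots x_n^{\beta_n}$, and then by Proposition~\ref{liemon} its Lie algebra is cut out by $\sum\beta_i\lambda_i=0$; comparing with the equation defining $\mathfrak g_g$ and using the normalization $\sum\beta_i=\deg g=d=\sum\alpha_i$ forces $\beta=\alpha$. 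In particular $\alpha$ has integer entries (a sanity check the algorithm can perform), and we output $f(x)=\lambda\cdot m(Ax)$ with $m=x_1^{\alpha_1}\cdots x_n^{\alpha_n}$; reading the rows of $A$ as the linear forms $l_i$ gives the desired factorization, and $\lambda$ is recovered by a single black-box evaluation.

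For the complexity claims I would reuse the implementation discussion of the equal-exponent case verbatim: step~1 is the black-box Lie-algebra computation of~\cite{kayal2012affine} (Lemma~22 there), steps~2--3 use Proposition~\ref{diagmat} to test diagonalizability and are otherwise routine, step~4 is simultaneous diagonalization via a random linear combination (Section~\ref{simulback}), and steps~5--7 are elementary linear algebra plus one black-box call. The handling of the case $\alpha_i=0$ for some $i$ is reduced, as explained before the algorithm, to the essential-variables computation of~\cite{kayal11,kayal2012affine}, so it need not be revisited in the proof. I expect the only genuinely delicate point to be the argument that the single linear equation recovered in step~5 correctly pins down the exponent vector—i.e., combining Proposition~\ref{liemon} with the normalization to conclude $\beta=\alpha$ rather than merely $\beta$ proportional to $\alpha$; everything else is a direct transcription of the equal-exponent proof together with the equivalences already established in Section~\ref{orbitsection}.
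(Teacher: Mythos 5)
Your proposal is correct and follows essentially the same approach as the paper: both directions are argued via Propositions~\ref{lieconj}, \ref{liemon}, \ref{moncharlie} and Theorem~\ref{orbit}, with the forward direction showing conjugacy of $\mathfrak{g}_f$ to $\mathfrak{g}_M$ forces the first steps to succeed, and the converse using Proposition~\ref{moncharlie} to conclude $g(x)=f(A^{-1}x)$ is a monomial; your extra care in spelling out why the normalization $\sum\alpha_i=d$ pins the exponent vector to $\beta=\alpha$ (rather than merely proportional) is a useful elaboration of what the paper states more tersely, and the minor misnumbering of steps (``steps~4--7'' versus the paper's six steps) is immaterial.
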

\begin{proof}
  The two main steps (finding a basis of $\mathfrak{g}_f$ and simultaneous diagonalization) can be implemented efficiently as in the case of equal exponents,
  so we'll focus on the correctness of the algorithm.

Assume first that $f$ can be written as $f(x)=L_1(x)^{\beta_1} \cdots L_n(x)^{\beta_n}$  where the $L_i$'s are linearly independent forms and $\beta_i \geq 1$ for all $i$. Then $f$ is in the orbit of the monomial $M=x_1^{\beta_1} \cdots x_n^{\beta_n}$, so $\mathfrak{g}_f$ and $\mathfrak{g}_M$ are conjugate by Proposition~\ref{lieconj}. 
By Proposition~\ref{liemon}, $\mathfrak{g}_M$  is the space of diagonal matrices $\diag(\lambda_1,\ldots,\lambda_n)$
  such that $\sum_{i=1}^n \beta_i \lambda_i =0$.
  These two facts imply that the first  4 steps of the algorithm will succeed.
  The polynomial $g(x)=f(A^{-1}x)$ defined 
at step 5 has a Lie algebra which is an $(n-1)$-dimensional
subspace of the space of diagonal matrices.
 By Proposition~\ref{moncharlie}, $g$ must therefore be a monomial.
Proposition~\ref{liemon} implies that the tuple of exponents of $g$ is correctly determined at step 5, so that 
we indeed have $g=\lambda.m$ at step 6.
Note that $m$ may differ from $M$ by a permutation of indices, and likewise the factorization output by the algorithm may differ
from  $f(x)=L_1(x)^{\beta_1} \cdots L_n(x)^{\beta_n}$ by a permutation of indices and the scaling of linear forms.

Conversely, if the 3 first steps of the algorithm succeed the $B_i$
must be simultaneously diagonalizable and it follows again from  Proposition~\ref{moncharlie} that the polynomial
$g$ defined at step 5 satisfies $g =\lambda.m$ where $\lambda \in K^*$ and $m$ is some monomial $ x_1^{\alpha_1} \cdots x_n^{\alpha_n}$. In particular,  Proposition~\ref{moncharlie} guarantees that the exponents $\alpha_i$ are all positive.
The algorithm will then output at step 6 a correct factorization of $f$.
\end{proof}
Like in Section~\ref{equal} we have presented our algorithm with a view towards factorisation over $\overline{K}$, but it is readily adapted
to factorization over some intermediate field
$K \subseteq \mathbb{K} \subseteq \overline{K}$
as explained in Remark~\ref{smallfield}.

{  In the above algorithm we need to perform the simultaneous diagonalization at
step~4 before computing the exponents $\alpha_i$. In the remainder of
this section we show that the exponents can be computed without step~4.
The corresponding algorithm relies on Proposition~\ref{eigenexp} below.
First,} we recall that for any set of matrices $S \subseteq M_n(K)$
the centralizer of
$S$ is the set of matrices that commute with all matrices of $S$.
It is a linear subspace of $M_n(K)$ and we denote it by $C(S)$.
\begin{proposition} \label{eigenexp}
  Consider a monomial $m=x_1^{\alpha_1} \cdots x_n^{\alpha_n}$
  with $\alpha_i \geq 1$ for all~$i$,
  and a polynomial $f$ in the orbit of $m$.
  
  The centralizer $C(\mathfrak{g}_f)$
  of the Lie algebra of $f$ is of dimension $n$.
  Moreover, there is a unique $H$ in $C(\mathfrak{g}_f)$ such that $\tr H =d$
  and $\tr (HM) =0$ for all $M \in \mathfrak{g}_f$.
  The matrix $H$ is diagonalizable, its eigenvalues are
  $(\alpha_1,\ldots,\alpha_n)$ and
{   $C(\mathfrak{g}_f) = \mathfrak{g_f} \oplus \mathrm{Span}(H)$.}
\end{proposition}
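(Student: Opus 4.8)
The plan is to reduce the whole statement to the case $f=m$ by conjugation and then finish with an elementary computation using the trace pairing on $M_n(K)$.

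Since $f$ lies in the orbit of $m$, Proposition~\ref{lieconj} provides an invertible matrix $A$ with $\mathfrak{g}_f=A^{-1}.\mathfrak{g}_m.A$. I would use that conjugation $\phi_A\colon X\mapsto A^{-1}XA$ is an algebra automorphism of $M_n(K)$ preserving the trace; in particular it preserves the nondegenerate bilinear form $(X,Y)\mapsto\tr(XY)$, it carries $C(\mathfrak{g}_m)$ bijectively onto $C(\mathfrak{g}_f)$, and it preserves direct sums, diagonalizability and the multiset of eigenvalues. It therefore suffices to prove all four assertions for $m$ itself, where by Proposition~\ref{liemon} the Lie algebra $\mathfrak{g}_m$ is the hyperplane $\{\diag(\lambda_1,\dots,\lambda_n):\sum_i\alpha_i\lambda_i=0\}$ inside the $n$-dimensional space $\mathcal{D}$ of diagonal matrices.

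First I would compute $C(\mathfrak{g}_m)$. The one genuinely non-formal point is that $\mathfrak{g}_m$ contains a diagonal matrix with pairwise distinct diagonal entries: the linear form $\lambda\mapsto\sum_i\alpha_i\lambda_i$ is proportional to no $\lambda_i-\lambda_j$ (every $\alpha_i\geq 1$ is positive, whereas $e_i-e_j$ has a negative coordinate), so the hyperplane $\mathfrak{g}_m$ is not contained in any $\{\lambda_i=\lambda_j\}$ and a generic element of $\mathfrak{g}_m$ is a regular diagonal matrix. A matrix commuting with a regular diagonal matrix is diagonal, which gives $C(\mathfrak{g}_m)\subseteq\mathcal{D}$; the reverse inclusion is clear, so $C(\mathfrak{g}_m)=\mathcal{D}$, of dimension $n$. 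Next, for $H=\diag(h_1,\dots,h_n)\in\mathcal{D}$ and $M=\diag(\lambda_1,\dots,\lambda_n)\in\mathfrak{g}_m$ one has $\tr(HM)=\sum_i h_i\lambda_i$, so requiring $\tr(HM)=0$ for all $M\in\mathfrak{g}_m$ is exactly requiring $(h_1,\dots,h_n)$ to be orthogonal to the hyperplane $\{\sum_i\alpha_i\lambda_i=0\}$, i.e.\ $(h_i)=c(\alpha_i)$ for a scalar $c$; the normalization $\tr H=cd=d$ forces $c=1$, so $H=\diag(\alpha_1,\dots,\alpha_n)$ is the unique such matrix, visibly diagonalizable with eigenvalues $\alpha_1,\dots,\alpha_n$. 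Finally $C(\mathfrak{g}_m)=\mathfrak{g}_m\oplus\mathrm{Span}(H)$, since $\mathfrak{g}_m$ is a hyperplane in $\mathcal{D}=C(\mathfrak{g}_m)$ and $H\notin\mathfrak{g}_m$, because $\sum_i\alpha_i\cdot\alpha_i=\sum_i\alpha_i^2\neq 0$ in characteristic $0$.

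To conclude I would transport everything back through $\phi_A$: this yields $\dim C(\mathfrak{g}_f)=n$, the matrix $H_f=A^{-1}\diag(\alpha_1,\dots,\alpha_n)A$ as a candidate that is diagonalizable with eigenvalues $\alpha_1,\dots,\alpha_n$, the decomposition $C(\mathfrak{g}_f)=\mathfrak{g}_f\oplus\mathrm{Span}(H_f)$, and uniqueness of $H_f$ from uniqueness of $H$ together with the facts that $\phi_A$ is trace-preserving and multiplicative. The main obstacle is really just the regular-element observation that pins down the centralizer; beyond that, the argument is a routine translation of the hypotheses into orthogonality for the trace form, plus conjugation-invariance.
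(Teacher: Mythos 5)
Your proof is correct and follows essentially the same route as the paper's: reduce to $f=m$ via the conjugation $X\mapsto A^{-1}XA$ (which preserves the trace form, diagonalizability, eigenvalues, and intertwines centralizers), show that $\mathfrak{g}_m$ contains a regular diagonal element so that $C(\mathfrak{g}_m)$ is exactly the space of diagonal matrices, and then pin down $H=\diag(\alpha_1,\dots,\alpha_n)$ by orthogonality under the trace pairing and the normalization $\tr H=d$, with $H\notin\mathfrak{g}_m$ because $\sum_i\alpha_i^2\neq 0$. The one small point where you add detail over the paper is your explanation of why $\mathfrak{g}_m$ is not one of the hyperplanes $\{\lambda_i=\lambda_j\}$ (positivity of the $\alpha_i$ versus the sign pattern of $e_i-e_j$), which is a clean way to justify the regular-element step that the paper leaves more implicit.
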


Note that the case 
$\alpha_1=\ldots=\alpha_n=1$
corresponds to $H=\mathrm{Id}$.
{  The condition $\tr (HM) =0$ for all $M \in \mathfrak{g}_f$
is an analogue of the trace zero condition in property (ii) of
Theorem~\ref{equalexp}.}

\begin{proof}
We first consider the case $f=m$. By Proposition~\ref{liemon},
$\mathfrak{g}_m$  is the set of diagonal matrices
$\diag(\lambda_1,\dots,\lambda_n)$ such that
$\sum_i\alpha_i\lambda_i=0$.

For $1\leq i\neq j\leq n$, let $\mathcal{H}_{ij}$ denote the set of
matrices $\diag(\lambda_1,\dots,\lambda_n)$ such that
$\lambda_i=\lambda_j$. 
Consider the set $\mathfrak{h}$ of diagonal matrices. 
The hyperplane $\mathfrak{g}_m$ of $\mathfrak{h}$ is equal to no
hyperplane of the form $\mathcal{H}_{ij}$.
Since the field is infinite, $\mathfrak{g}_m$ is not contained in the
union of the hyperplanes $\mathcal{H}_{ij}$. Then $\mathfrak{g}_m$
contains a matrix $M_0$ with pairwise distinct eigenvalues.
Then 
$\lh\subseteq C(\mathfrak{g}_m)\subseteq C(M_0)\subseteq \lh$, and
$C(\mathfrak{g}_m)=\lh$.

 Set $H_0=\diag(\alpha_1,\dots,\alpha_n)\in\lh$.
It is clear that $\tr(H_0)=d$ and $\tr(H_0M)=0$ for any $M\in
\mathfrak{g}_m$. 
Conversely, let 
$H=\diag(\beta_1,\dots,\beta_n)\in\lh$ and
$M=\diag(\lambda_1,\dots,\lambda_n)\in \mathfrak{g}_m$. 
Then $\tr(HM)=\sum_i\beta_i\lambda_i$. 
Since  $\mathfrak{g}_m$ is the hyperplane of $\lh$ defined by
$\sum_i\alpha_i\lambda_i=0$, $\tr(HM)=0$, for any $M\in
\mathfrak{g}_m$ if and only if $(\beta_1,\dots,\beta_n)$ is
proportional to $(\alpha_1,\dots,\alpha_n)$.  
If moreover $\tr(H)=d$, we get $H=H_0$. This proves the unicity.
Morever, $\tr(H_0^2)=\sum_i\alpha_i^2\neq 0$ and $H_0\not\in \mathfrak{g}_m$,
since the field has
characteristic zero. Then, since  $\mathfrak{g}_m$ is an hyperplane of
$\lh$, $\mathfrak{g}_m\oplus KH_0=C(\mathfrak{g}_m)=\lh$.\\

Consider now a point $f$ in the orbit of $m$. Let $A$ be an invertible
matrix such that $f=A.m=m\circ A^{-1}$.
Then, by Proposition~\ref{lieconj}, 
$\mathfrak{g}_f=A \mathfrak{g}_mA^{-1}$ and
$C(\mathfrak{g}_f)=A C(\mathfrak{g}_m)A^{-1}$.
One easily checks that $H$ satisfies the proposition 
for~$f$ if and
only if $A^{-1}HA$ satisfies it for $m$. With the first part, this proves the existence and unicity of $H$. 
\end{proof}
{  This proposition yields the following algorithm for the computation
of the exponents $\alpha_1,\ldots,\alpha_n$. We assume that the first three steps of the algorithm of Theorem~\ref{general_th} have executed successfully.
\begin{itemize}
\item[(a)] Set up and solve the linear system which expresses that
  $\tr[H]=d$, $\tr[HB_i]=0$ and $HB_i = B_iH$ for all $i=1,\ldots,n-1$.
  Here $(B_1,\ldots,B_{n-1})$ is the basis of $\mathfrak{g}_f$ computed
  at step 1 of the algorithm of Theorem~\ref{general_th}. The system's unknowns
  are the $n^2$ entries of $H$.

\item[(b)] Compute the eigenvalues $\alpha_1,\ldots,\alpha_n$ of $H$.
\end{itemize}
Note that the system constructed at step (a) is overdetermined: it has
$\Theta(n^3)$ equations but only $n^2$ unknowns.
Proposition~\ref{eigenexp} guarantees that the system has a unique solution
$H$, and that the eigenvalues of $H$ are the exponents
$\alpha_1,\ldots,\alpha_n$.
We refer to Section~\ref{diagback} for the computation of eigenvalues
at step~(b).}

\section{Bivariate projections} \label{bivariate}

In this section we present a probabilistic black box algorithm that finds a factorization
into products
of linear forms whenever this is possible, without any assumption of linear
independence of the linear forms.
As explained before this can be done with the algorithm by Kaltofen and Trager~\cite{KalTra90}.

We assume that the input polynomial is in $K[x_1,\ldots,x_n]$ where $K$
is infinite. In contrast to Section~\ref{factorization}, we do not need to
assume that $K$ is of characteristic 0. The hypothesis that $K$ is infinite
is needed because the algorithm draws random elements from ``large enough''
but finite subsets $S \subseteq K$. The algorithm also applies to a finite
field if $K$ is large enough for this, or if we can draw points from
a large enough field extension.

As in~\cite{kaltofen89,KalTra90} we rely on bivariate projections 
but we present a simplified algorithm
which takes advantage of the fact that we are trying to factor
polynomials of a special form %add
(another simple algorithm based on a different idea is presented
in the next section).
In these two papers, a factorization of the input polynomial is recovered
from a single bivariate projection (see Step~R in~\cite{kaltofen89} and Step~1
in~\cite{KalTra90} \footnote{%
  More precisely, the construction of the black boxes for the irreducible factors of $f$ requires a single projection.
Evaluating these black boxes at an input point requires another bivariate projection, see Step~A in~\cite{KalTra90}}). By contrast, we will recover 
the solution to our problem from
several projections as in e.g.~\cite{GKP18,kayal2012affine}.
A recurring difficulty with projection-based algorithms is that when
we try to ``lift'' the solutions of problems on a lower-dimensional space
to a solution of the original problem, the lift may not be unique.
We first present in Section~\ref{unique} a solution under an additional
assumption which guarantees uniqueness of the lift. We then lift (as it were)
this assumption in Section~\ref{randomproj}.

We assume that a polynomial time factorization algorithm for polynomials in $K[x,y]$
is available. It is explained in~\cite{kaltofen85} how to obtain
such an algorithm from a univariate factorization algorithm for the field of rational numbers, and more generally for number fields and finite fields.
In the case of absolute factorization, polynomial time algoritms were first given by Gao~\cite{gao03} and by 
Ch\`eze and Lecerf~\cite{ChezeLecerf07}.
The complexity of the latter algorithm was analyzed in~\cite{ChezeLecerf07} for the algebraic (unit cost) model of 
computation. The complexity of the former algorithm  was analyzed in~\cite{gao03} for an input polynomial with coefficients
in a finite field $F_q$.\footnote{The algorithm also works for fields of characteristic 0, but a precise analysis of its complexity was left for future work.}

Without loss of generality, we'll assume that our input $f$ is
in $K[x_1,\ldots,x_n]$ with $n \geq 4$.
Indeed, if there are only 3 variables we can set
$g(x_1,x_2) = f(x_1,x_2,1)$, use the bivariate algorithm
to factor $g$ as a product of affine forms, and homogonenize the result to
obtain a factorization of $f$. Note that the homogenization step includes
a multiplication by $x_3^{\deg(f)-\deg(g)}$.

\subsection{A uniqueness condition} \label{unique}

In this section we assume that our input $f(x_1,\ldots,x_n)$
can be factorized as
\begin{equation} \label{manyforms}
  f(x)=\lambda.l_1(x)^{\alpha_1} \cdots  l_k(x)^{\alpha_k}
  \end{equation}
where the linear form $l_i$ is not proportional to $l_j$ if $i \neq j$,
and $\lambda$ is a nonzero constant.
We would like to recover $\lambda$, the exponents $\alpha_i$'s and the $l_i$'s
(note that each linear forms is defined only up to a constant).

Write $l_i(x) =\sum_{j=1}^n l_{ij}x_j$.
In order to guarantee ``uniqueness of the lift'' we make
the following temporary assumption:
\begin{itemize}
\item[(*)] The $k$ coefficients $l_{i1}$ are distinct and nonzero
  and $l_{in}=1$ for all $i$.
 \end{itemize}
The algorithm is as follows.
\begin{enumerate}
\item For $j=2,\ldots,n-1$ define $g_j(x_1,x_j) = f \circ \pi_j$
  where the projection $\pi_j$ sends variable $x_n$ to the constant 1,
  leaves $x_1$ and $x_j$ unchanged and sets all other variables to 0.
  Compute the dense representation of the $g_j$'s by interpolation.

\item Using the bivariate factorization algorithm, write each $g_j$
  as $g_j(x_1,x_j)=\lambda. \prod_{i=1}^k (a_{ij}x_1+b_{ij}x_j+1)^{\beta_{ij}}$.

\item At the beginning of this step, each of the $n-2$ tuples
  $(a_{1j},\ldots,a_{kj})$ is a permutation of the tuple $(l_{11},\ldots,l_{k1})$.
  We reorder the factors in the factorizations of the $g_j$
  from step 2 to make sure that the $n-2$ tuples are identical (i.e.,
  its elements always appear in the same order).
  After reordering, the $n-2$ tuples of exponents
  $(\beta_{1j},\ldots,\beta_{kj})$ will also become identical.
  We therefore obtain factorizations of the form:
  $$g_j(x_1,x_j)=\lambda. \prod_{i=1}^k (a_ix_1+c_{ij}x_j+1)^{\gamma_i}.$$

\item We output the factorization:
  $$f(x_1,\ldots,x_n)=\lambda.
  \prod_{i=1}^k (a_ix_1+c_{i2}x_2+\cdots+c_{i,n-1}x_{n-1}+x_n)^{\gamma_i}.$$
\end{enumerate}
The main issue regarding the correctness of this algorithm is to make sure
that we have correctly combined the factors of the $g_j$'s to obtain
the factors of $f$.
This is established in the next proposition.
For an example of what can go wrong without assumption (*) consider the
following two polynomials:
$$f_1=(x_1+x_2+x_3+x_4)(x_1+2x_2+2x_3+x_4)$$
and
$$f_2=(x_1+x_2+2x_3+x_4)(x_1+2x_2+x_3+x_4).$$
At step 1 of the algorithm, these two polynomials are mapped to the same
pair of bivariate polynomials:
$$g_2=(x_1+x_2+1)(x_1+2x_2+1),\ g_3=(x_1+x_3+1)(x_1+2x_3+1)$$
and there is no unique way of lifting $\{g_2,g_3\}$ to an input polynomial.
Another difficulty is that the factorization pattern of $f$ (i.e., the set of exponents $\{\alpha_1,\ldots,\alpha_k\}$) could change after projection, for instance $$f=(x_1+x_2+x_3+x_4)(x_1+2x_2+x_3+x_4)$$ is mapped to
$$g_2=(x_1+x_2+1)(x_1+2x_2+1),\ g_3=(x_1+x_3+1)^2.$$

\begin{proposition} \label{assumptionprop}
  The above algorithm correctly factorizes the polynomials of form~(\ref{manyforms}) that satisfy assumption (*).
\end{proposition}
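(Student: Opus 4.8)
The plan is to verify that each of the four steps preserves correctness, with the key point being that assumption (*) makes every reconstruction step unique. Write $f = \lambda \prod_{i=1}^k l_i(x)^{\alpha_i}$ with $l_i(x) = l_{i1}x_1 + \cdots + l_{i,n-1}x_{n-1} + x_n$ and all $l_{i1}$ distinct and nonzero by (*). The first thing I would check is that applying $\pi_j$ to $f$ yields
$$g_j(x_1,x_j) = \lambda \prod_{i=1}^k (l_{i1}x_1 + l_{ij}x_j + 1)^{\alpha_i},$$
and — crucially — that this is already the factorization into irreducible factors over $\overline{K}$, up to constants, because each factor $l_{i1}x_1 + l_{ij}x_j + 1$ is an irreducible (affine-linear) polynomial and no two are proportional: if $l_{i1}x_1 + l_{ij}x_j + 1$ were a scalar multiple of $l_{i'1}x_1 + l_{i'j}x_j + 1$, comparing constant terms forces the scalar to be $1$, then comparing $x_1$-coefficients forces $l_{i1} = l_{i'1}$, hence $i = i'$ since the $l_{i1}$ are distinct. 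So each $g_j$ has exactly $k$ distinct irreducible factors, with multiplicities $\alpha_i$.

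Next I would argue that the bivariate factorization algorithm at step 2 therefore returns, for each $j$, the list $\{(l_{i1}x_1 + l_{ij}x_j + 1, \alpha_i)\}_{i=1}^k$ in some order — possibly a different order for each $j$. Since the coefficient of $x_1$ in the $i$-th factor is exactly $l_{i1}$, the multiset $\{a_{1j}, \ldots, a_{kj}\}$ equals $\{l_{11}, \ldots, l_{k1}\}$ for every $j$, and because these values are \emph{distinct} (again by (*)) there is a \emph{unique} reordering of the factors of $g_j$ that makes the tuple $(a_{1j}, \ldots, a_{kj})$ equal to a fixed reference tuple, say the one obtained for $j=2$ sorted in some canonical way. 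After this reordering, the factor indexed by $i$ in every $g_j$ has the same $x_1$-coefficient $a_i := l_{i1}$, hence is precisely $l_{i1}x_1 + l_{ij}x_j + 1$ with multiplicity $\gamma_i := \alpha_i$; in particular $c_{ij} = l_{ij}$.

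Finally, step 4 reassembles $\prod_i (a_i x_1 + c_{i2}x_2 + \cdots + c_{i,n-1}x_{n-1} + x_n)^{\gamma_i} = \prod_i l_i(x)^{\alpha_i}$, and multiplying by $\lambda$ gives back $f$ exactly; one should note the leading constant $\lambda$ is read off consistently from any $g_j$ (all the $g_j$ have the same leading constant $\lambda$, being specializations of $f$). I would close by remarking that the output is the correct factorization of $f$ \emph{on the nose}, not merely up to permutation and scaling, because (*) has rigidified both the ordering (via the distinct $l_{i1}$) and the normalization (via $l_{in}=1$).

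\textbf{Main obstacle.} The only real subtlety — and the step I would be most careful about — is the claim at step 3 that reordering to align the $x_1$-coefficients \emph{simultaneously} aligns the exponents, i.e. that the factor with $x_1$-coefficient $l_{i1}$ in $g_j$ carries multiplicity $\alpha_i$ regardless of $j$. This is exactly where the examples in the text ($f=(x_1+x_2+x_3+x_4)(x_1+2x_2+x_3+x_4)$ collapsing to $g_3=(x_1+x_3+1)^2$) show the statement can fail without (*): there, two originally-distinct forms $l_i$ have the same coefficient in variable $x_j$ for some $j$, so their projected factors coincide and their exponents add, destroying the bijection between factors of $f$ and factors of $g_j$. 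Under (*) this cannot happen, because the $x_1$-coefficient $l_{i1}$ alone already distinguishes all $k$ factors in every $g_j$; so the bijection $i \leftrightarrow$ (factor of $g_j$ with $x_1$-coefficient $l_{i1}$) is well-defined for each $j$ and transports the exponent $\alpha_i$ unchanged. Everything else is a routine check that projection, interpolation, and homogenization do what they are supposed to.
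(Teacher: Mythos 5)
Your proposal is correct and follows essentially the same route as the paper's proof: verify that each projection $g_j$ has the displayed factorization, use distinctness of the $l_{i1}$ to show no two projected factors are proportional (so the factorization pattern is preserved), then observe that aligning the $x_1$-coefficients in step~3 gives a consistent permutation carrying both the other coefficients and the exponents. (One small overstatement: the algorithm fixes the ordering only relative to an arbitrary reference tuple, so the output matches the original labeling of the $l_i$ up to a global permutation of indices, not ``on the nose''; this of course does not affect correctness of the factorization.)
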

\begin{proof}
  Since $l_{in}=1$ for all $i$ we have $\lambda=f(0,\cdots,0,1)=g_j(0,\cdots,0)$
  for all $j=2,\ldots,n-1$.
  Each $g_j$ admits the factorization:
  \begin{equation} \label{2factor}
    g_j(x_1,x_j)=\lambda. \prod_{i=1}^k (l_{i1}x_1+l_{ij}x_j+1)^{\alpha_j}
    \end{equation}
  All these polynomials therefore have same factorization pattern as $f$
  (note in particular that the affine forms $l_{i1}x_1+l_{ij}x_j+1$ are nonconstant since $l_{i1} \neq 0$; and two of these forms cannot be proportional
  since the $l_{i1}$ are distinct).  It follows that the factorization of $g_j$ discovered by the algorithm
  at step 2 is identical to~(\ref{2factor}) up to a permutation,
  i.e., we have $a_{ij}=l_{\sigma_j(i)1}$, $b_{ij}=l_{\sigma_j(i)j}$
  and $\beta_{ij}=\alpha_{\sigma_j(i)}$ for some permutation
  $\sigma_j \in {\mathfrak{S}}_k$.
  Since the $l_{i1}$ are distinct, after reordering at step 3 these $n-2$ permutations become identical, i.e.,
  we have $a_{i}=l_{\sigma(i)1}$, $c_{ij}=l_{\sigma(i)j}$
  and $\gamma_{i}=\alpha_{\sigma(i)}$ for some permutation~$\sigma$.
  Finally, at step 4 the algorithm outputs the correct factorization
  $f(x)=\lambda.\prod_{i=1}^k l_{\sigma(i)}(x)^{\alpha_{\sigma(i)}}$.
 \end{proof}

\subsection{General case} \label{randomproj}

In this section we present a black box algorithm that factors a homogeneous
polynomial $f \in K[x_1,\ldots,x_n]$ of degree $d$ into a product of $d$
linear forms whenever this is possible, thereby lifting assumption (*)
from Section~\ref{unique}.
The algorithm is as follows.
\begin{enumerate}
\item Set $g(x)=f(A.x)$ where $A \in M_n(K)$ is a random matrix.
\item Attempt to factor $g$ with the algorithm of Section~\ref{unique}.
  If this fails, reject $f$. In case of success,
  let $g'(x)=\lambda.l_1(x)^{\alpha_1} \cdots  l_k(x)^{\alpha_k}$
  be the factorization output by this algorithm.
\item Check that $f(x)=g'(A^{-1}.x)$ and output the corresponding factorization.
\end{enumerate}
The random matrix at step~1 is constructed by drawing its entries
independently at random from some large enough finite set $S \subseteq K$.
The point of this random change of variables is that $g$ will satisfy
assumption (*) of Section~\ref{unique} with high probability if $f$
can be factored as a product of linear forms. The (quite standard) arguments
needed to estimate the probability of success are presented in the
proof of Theorem~\ref{projth}.
Note also that by the Schwarz-Zippel lemma, $A$ will be invertible with high probability.

At step 2 we need a black-box for $g$. Such a black box is easily
obtained by composing the black box for $f$ with the map $x \mapsto A.x$.

At step 3, we check the polynomial identity $f(x)=g'(A^{-1}.x)$
by evaluating the left and right-hand sides at 
{one random point}.
\begin{theorem} \label{projth}
  The above algorithm runs in polynomial time and determines whether $f$
  can be written as a product of linear forms.
 It outputs such a factorization if there is one.
\end{theorem}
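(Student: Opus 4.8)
The plan is to verify three things: that the algorithm runs in polynomial time; that whenever it outputs a factorization this factorization is correct (except with probability at most $\epsilon$); and that it does output a factorization whenever $f$ is a product of linear forms (again except with probability at most $\epsilon$), where $\epsilon$ is driven down by enlarging the sampling set $S$. Recall from the beginning of Section~\ref{bivariate} that we may assume $n\ge 4$, the cases $n\le 3$ reducing directly to bivariate factorization. For the running time, I would note that a black box for $g(x)=f(A.x)$ is obtained by precomposing the black box for $f$ with $x\mapsto A.x$, so Step~1 is essentially free; Step~2 runs the algorithm of Section~\ref{unique}, whose cost is $n-2$ dense interpolations of bivariate polynomials of degree at most $d$ (each using $O(d^2)$ black-box calls), then $n-2$ bivariate factorizations (polynomial time by hypothesis), then a reordering of at most $d$ factors across $n-2$ lists; Step~3 is one black-box call plus one evaluation of the explicit polynomial $g'$. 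With $|S|$ polynomial in $n$, $d$ and $1/\epsilon$ this is $\mathrm{poly}(n,d,\log(1/\epsilon))$ in total.

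For soundness, suppose the algorithm reaches Step~3 and the identity test $f(x)=g'(A^{-1}.x)$ passes at the random point. The polynomial $f(x)-g'(A^{-1}.x)$ has degree at most $d$, so by the Schwarz--Zippel lemma it is identically zero except with probability at most $d/|S|$; when it vanishes identically, $f$ is the product of linear forms $g'$ precomposed with the linear map $A^{-1}$, hence is itself a product of linear forms, and the output is correct. If Step~2 fails instead (a bivariate factor of degree at least $2$ appears, or $g_j(0,\ldots,0)=0$, or the reordering step of the Section~\ref{unique} algorithm is impossible), the algorithm rejects; since these are genuine obstructions to $g$, and hence to $f$, being a product of linear forms, the only way a rejection can be erroneous is through the low-probability event analysed next.

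For completeness, suppose $f(x)=\lambda\,l_1(x)^{\alpha_1}\cdots l_k(x)^{\alpha_k}$ with the $l_i$ pairwise non-proportional and all $\alpha_i\ge 1$. Then $g(x)=f(A.x)=\lambda\,(l_1\circ A)(x)^{\alpha_1}\cdots(l_k\circ A)(x)^{\alpha_k}$, and the forms $m_i:=l_i\circ A$ are pairwise non-proportional as soon as $A$ is invertible. I would claim that, for $A$ with entries drawn uniformly from a large enough $S$, with probability at least $1-\epsilon$ we simultaneously have: $\det A\ne 0$; $m_i(e_n)=l_i(Ae_n)\ne 0$ for every $i$ (so that, after replacing each $m_i$ by $m_i/m_i(e_n)$ and absorbing the scalars into the leading constant, the coefficient of $x_n$ in $m_i$ is $1$); and the rescaled coefficients $l_i(Ae_1)/l_i(Ae_n)$ of $x_1$ are nonzero and pairwise distinct. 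Each of these conditions is the non-vanishing of an explicit, not-identically-zero polynomial in the entries of $A$ --- namely $\det A$; $l_i(Ae_n)$ (nonzero because $l_i\ne 0$); $l_i(Ae_1)$; and $l_i(Ae_1)\,l_j(Ae_n)-l_j(Ae_1)\,l_i(Ae_n)$, which is not identically zero precisely because $l_i$ and $l_j$ are not proportional --- of total degree $O(n+k^2)=\mathrm{poly}(n,d)$, so a union bound together with Schwarz--Zippel gives the claim. On this event the rescaled factorization of $g$ satisfies assumption~(*), so by Proposition~\ref{assumptionprop} Step~2 succeeds and returns $g'$ with $g'=g$ as polynomials; then $g'(A^{-1}.x)=g(A^{-1}.x)=f(x)$ identically, Step~3 passes, and the algorithm outputs a correct factorization of $f$.

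The main obstacle is the last part of the completeness claim: showing that a random invertible change of variables \emph{separates} the distinct linear forms of $f$, i.e.\ that $i\mapsto l_i(Ae_1)/l_i(Ae_n)$ is injective with high probability. This is exactly what reduces the general case to the uniqueness condition~(*) of Section~\ref{unique}; the remaining ingredients --- Schwarz--Zippel estimates, interpolation, and a direct appeal to Proposition~\ref{assumptionprop} --- are routine. One should also check that the degenerate case $k=1$ (a single distinct factor) causes no difficulty, which it does not since assumption~(*) is then essentially vacuous.
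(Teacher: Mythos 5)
Your proof proposal is correct and follows essentially the same route as the paper: reduce to assumption (*) of Section~\ref{unique}, show it holds with high probability for a random change of variables via Schwarz--Zippel applied to the non-identically-zero polynomials $\det A$, $\ell_{in}$, $\ell_{i1}$ and $\ell_{i1}\ell_{jn}-\ell_{j1}\ell_{in}$, and then invoke Proposition~\ref{assumptionprop} together with the random-point identity check at Step~3. The only (harmless) imprecision is the aside that a failure at Step~2 is a ``genuine obstruction'' to $f$ being a product of linear forms: an event such as $g_j(0,\ldots,0)=0$ is \emph{not} such an obstruction (it merely signals that (*) fails for the chosen $A$), but this does not affect the argument since soundness rests solely on the Step~3 verification and completeness is handled by the probability estimate you give.
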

\begin{proof}
  By the Schwarz-Zippel lemma, any factorization of $f$ output at step 3
  will be correct with high probability.
  So we only need to prove the converse: if $f$ can be factored as a product
  of linear forms, the algorithm finds a correct factorization with high
  probability. Suppose therefore that 
$$f(x)=L_1(x)^{\alpha_1} \cdots  L_k(x)^{\alpha_k}$$
  where no two linear forms $L_i, L_j$ in this expression are proportional.
  Then $g(x)=f(A.x)$ can be written as
  $$g(x)=\ell_1(x)^{\alpha_1} \cdots  \ell_k(x)^{\alpha_k}$$
  where $\ell_i(x)=L_i(A.x)$.
  If $A$ is invertible, the linear forms in this expression will not
  be proportional.
  The coefficients of these linear forms are given by the expression:
  \begin{equation} \label{elleq}
    \ell_{ij}=\sum_{p=1}^n L_{ip} A_{pj}.
    \end{equation}
  If the entries $A_{pj}$ of $A$ are drawn from a set $S \subset K$,
  $\ell_{in}=0$ with probability at most $1/|S|$
  since $L_i {\not \equiv} 0$. These $n$ coefficients will all be nonzero
  with probability at least $1-n/|S|$; in this case we can
  factor out $\lambda=\prod_{i=1}^k \ell_{in}^{\alpha_i}$ to make sure that
  the coefficient of $x_n$ in each linear form is equal to 1 as required
  by assumption (*).
  This gives the factorization
  $$g(x)=\lambda.l_1(x)^{\alpha_1} \cdots l_k(x)^{\alpha_k}$$
  where $l_i(x)=\ell_i(x)/\ell_{in}$.
  The same argument as for $\ell_{in}$ shows that $\ell_{i1}$ and therefore
  $l_{i1}$ will be nonzero with high probability.
  To take care of assumption~(*), it remains to check that the $l_{i1}$
  will be  distinct with high probability.
  The condition $l_{i1} \neq l_{j1}$ is equivalent to
  $\ell_{i1}\ell_{jn} - \ell_{j1} \ell_{in} \neq 0$.
  By~(\ref{elleq}) this expression can be viewed as a quadratic form
  in the entries of $A$. From unique factorization and the
  hypothesis that the linear forms $L_i, L_j$ are not proportional it follows
  that this quadratic form is not identically 0. We conclude again
  that it will be nonzero with high probability by the Schwarz-Zippel lemma.

  We have established that $g(x)=f(A.x)$ satisfies (*) with high probability.
  In this case, by Proposition~\ref{assumptionprop}
  the factorization of $g$ at step 2 of the algorithm 
  and the verification of the polynomial identity at step 3 will also
  succeed.
\end{proof}

\section{Identifying the hyperplanes and their multiplicities}
\label{hyperplane}

If a polynomial $f$ can be factored as a product of linear forms, its zero
set $Z(f)$
is a union of (homogeneous) hyperplanes.
In this section we present an algorithm based on this simple geometric
fact.

We can identify each hyperplane in $Z(f)$
by finding $n-1$ nonzero points that lie on it. Assume that
$f$ can be written as
$f(x)=\lambda.l_1(x)^{\alpha_1} \cdots l_k(x)^{\alpha_k}$ where the linear forms $l_i$ are
not proportional. We will need a total of $k(n-1)$
points on $Z(f)$ to identify the $k$ hyperplanes.
Our algorithm begins with the determination of these $k(n-1)$ points.
\begin{enumerate}
\item Pick a random point $a \in K^n$ and $n-1$ random
  vectors $v_1,\ldots,v_{n-1}$ in~$K^n$  
(or representatives of points in $\mathbb{P}(K^{n})$ to be more precise). 

  Let $\Delta_i$ be the line of direction $v_i$ going through $a$.
  Compute the intersection $\Delta_i \cap Z(f)$ for $i=1,\ldots,n-1$.

\item Output the $k(n-1)$ intersection points $a_1,\ldots,a_{k(n-1)}$ found at step~1.
\end{enumerate}
In the sequel, we assume that $f(a) \neq 0$. This holds with high probability
by the Schwarz-Zippel lemma.

At step~1 we compute $\Delta_i \cap Z(f)$ by finding the roots of
the univariate polynomial $g(t)=f(a+tv_i)$. We obtain one point
on each hyperplane $Z(l_1),\ldots,Z(l_k)$ except if $v_i$ belongs to
one of these hyperplanes. This can happen only with negligible probability.
Moreover, these $k$ points are distinct except if $\Delta_i$ goes
through the intersection of two of these hyperplanes.
Again, this happens with negligible probability (we explain in the proof of
Theorem~\ref{hyperplaneth} how to obtain explicit bounds on the probabilities
of these bad events).
Since $a {\not \in Z(f)}$, with high probability we find a total of
$k(n-1)$ distinct points as claimed at step~2.
Moreover, each hyperplane $Z(l_i)$ contains exactly $n-1$ points.
Note that at step~1 we have also determined $k$ if this parameter was
not already known in advance.

At the next stage of our algorithm we determine the $k$ hyperplanes.
We first determine the hyperplane going through $a_1$ as follows:
\begin{enumerate}
\item[3.] Find $n-2$ points $b_2,\ldots,b_{n-1}$ in the set
  $\{a_2,\ldots,a_{k(n-1)}\}$ such that each line $(a_1 b_j)$ is included
  in $Z(f)$.

\item[4.] Output the linear subspace
  $H_1=\mathrm{Span}(a_1,b_2,\ldots,b_{n-1})$.
\end{enumerate}
At step~3 we can find out whether a line $(a_1a_j)$ is included in $Z(f)$
by checking that the univariate polynomial $g(t)=f(ta_1+(1-t)a_j)$
is identically~0.
This can be done deterministically with $k-1$ calls to the black box for $f$
(indeed, if $g {\not \equiv} 0$ this polynomial has at most $k$ roots,
and we already know that $g(0)=g(1)=0$). Alternatively, we can perform
a single call to the black box by evaluating $g$ at a random point.

Assume for instance that $Z(l_1)$ is the hyperplane going through
$a_1$. In the analysis of the first two steps we saw that (with high probability)
$a_1$ does not lie on any other $Z(l_j)$, and that exactly $n-2$ points
$b_2,\ldots,b_{n-1}$ in $\{a_2,\ldots,a_{k(n-1)}\}$ lie on $Z(l_1)$.
The algorithm identifies these points at step 3
%add:
(we will find exactly one point on each line $\Delta_i$).
It follows that the subspace $H_1$ output at step 4 is included in $Z(l_1)$.
To conclude that $H_1 = Z(l_1)$, it remains to show that $H_1$ is 
of dimension~$n-1$.
Assume without loss of generality that $\{a_1\}=\Delta_1 \cap Z(l_1)$
and $\{b_j\} = \Delta_j \cap Z(l_1)$ for $j=2,\ldots,n-1$.
Then $a_1=a+t_1v_1$ and $b_j=a+t_jv_j$ for $j=2,\ldots,n-1$.
Here $v_1,\ldots,v_{n-1}$ are the directions chosen at step 1,
and $t_1,\ldots,t_{n-1}$ are appropriate nonzero scalars.
With high probability, the $n$ vectors $a,v_1,\ldots,v_{n-1}$ are linearly
independent. In this case, the family $a+t_1v_1,\ldots,a+t_{n-1}v_{n-1}$
is of rank $n-1$ as desired.

The above analysis shows that steps 3 and 4 identify $H_1=Z(l_1)$
with high probability. The $k-1$ remaining hyperplanes can
be identified by repeating this procedure.
For instance, to determine the second hyperplane $H_2$ we will remove
the points $a_1,b_2,\ldots,b_{n-1}$ (which lie on $H_1$)
from the set $\{a_1,\ldots,a_{k(n-1)}\}$ and we will determine the hyperplane
going through the first of the $(k-1)(n-1)$ remaining points.

In the next stage of the algorithm we determine the multiplicities
$\alpha_i$ of the linear forms $l_i$. This is done as follows:
\begin{enumerate}
\item[5.] Consider again the random point $a$ and the random vector $v_1$
  drawn at step 1. We have already computed the intersection points
  with $H_1=Z(l_1),\ldots,H_k=Z(l_k)$ of the line $\Delta_1$ of direction
  $v_1$ going through~$a$.
  Recall that this was done by computing the roots $t_1,\ldots,t_k$
  of the univariate polynomial $g(t)=f(a+tv_1)$.

  Let us assume without loss of generality that these roots are ordered
  so that $\{a+t_1v_1\} = H_1 \cap \Delta_1,\ldots, \{a+t_k v_1 \} = H_k \cap \Delta_1$.
  Now we compute the multiplicities $\alpha_1,\ldots,\alpha_k$
  of $t_1,\ldots,t_k$ as roots of $g$ and we output these multiplicities.
\end{enumerate}
If $f(x)=l_1(x)^{\alpha_1} \cdots l_k(x)^{\alpha_k}$, the multiplicities
of the roots of $g$ are indeed equal to $\alpha_1,\ldots,\alpha_k$ except if
$\Delta_1$ goes through the intersection of two of the hyperplanes $H_1,\ldots,H_k$. As already pointed out  in the analysis of the first two steps,
this happens only
with negligible probability.
Note that there is nothing special about $\Delta_1$ at step 5: we could
have used a new random line~$\Delta$ instead.

The final stage of the algorithm is a normalization step.
\begin{enumerate}
\item[6.] At the beginning of this step we have determined linear forms $l_i$
  and multiplicities $\alpha_i$
  so that $f(x)=\lambda.l_1(x)^{\alpha_1} \cdots l_k(x)^{\alpha_k}$
  for some constant~$\lambda$. We determine~$\lambda$ by one call to the black box for $f$
  at a point where the $l_i$ do not vanish (for instance, at a random point).
\end{enumerate}
We have obtained the following result.
\begin{theorem} \label{hyperplaneth}
  Let $f \in K[x_1,\ldots,x_n]$ be a polynomial of degree $d$
  that admits a factorization
  $f(x)=\lambda.l_1(x)^{\alpha_1} \cdots l_k(x)^{\alpha_k}$ over $\overline{K}$,
  where no two linear forms $l_i$ are proportional.
  The above algorithm determines such a factorization with high probability,
  and the number of calls to the black box for $f$ is polynomial
  in $n$ and $d$. 

  Assume moreover that $K=\qq$ and that a factorization of $f$
  where $l_i \in \qq[x_1,\ldots,x_n]$ is possible.
  If the coefficients of these linear forms are of bit size
  at most $s$ then all calls to the black box are made at rational points of
  bit size polynomial in $n$, $d$ and $s$.
\end{theorem}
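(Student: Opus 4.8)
The plan is to make the informal analysis that precedes the statement precise, splitting the argument into three parts: correctness of the six steps conditioned on the random choices avoiding a list of ``bad'' events; a count of the black box calls; and the bit-size bookkeeping needed for the last assertion. Throughout, let $S\subseteq K$ be the finite set from which $a$ and $v_1,\dots,v_{n-1}$ (and the evaluation point of step~6) are drawn, and note that since $d=\sum_i\alpha_i$ with each $\alpha_i\ge 1$ we have $k\le d$. So to make the failure probability at most a prescribed $\epsilon$ it suffices to exhibit, for each bad event, a nonzero polynomial in the random parameters of degree $\mathrm{poly}(n,d)$ whose zero set contains that event, apply the Schwarz--Zippel lemma, union-bound, and take $|S|=\Theta(\mathrm{poly}(n,d)/\epsilon)$.

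First I would list the bad events and check the non-vanishing and degree of the corresponding polynomials: (1)~$f(a)=0$, impossible unless $a$ is a root of the degree-$d$ polynomial $f$; (2)~$l_j(v_i)=0$ for some $i,j$, a nonzero linear condition since $l_j\ne 0$; (3)~a line $\Delta_i$ meets $Z(l_j)\cap Z(l_{j'})$, which happens exactly when $l_j(a)l_{j'}(v_i)=l_{j'}(a)l_j(v_i)$, and here the key point is that the bilinear form $l_j(a)l_{j'}(v_i)-l_{j'}(a)l_j(v_i)$ is \emph{not} identically zero precisely because $l_j$ and $l_{j'}$ are not proportional; (4)~$a,v_1,\dots,v_{n-1}$ linearly dependent, a single determinantal condition of degree $n$; (5)~the point used in step~6 lies in $Z(f)$, ruled out unless it is a root of the degree-$d$ polynomial $f$. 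On the complement of all these events the reasoning in the text applies directly: every $\Delta_i$ meets every $Z(l_j)$ in exactly one point, and for fixed $i$ these $k$ points are distinct, so step~2 outputs exactly $k(n-1)$ distinct points with $n-1$ on each hyperplane; in step~3 the line test---whose correctness is the elementary fact that $g(t)=f(ta_1+(1-t)a_j)$ is affine on each factor, hence has at most $k$ roots unless $\equiv 0$, so that $k-1$ evaluations beyond the known roots $t=0,1$ settle it (or one random evaluation does, at the cost of one more Schwarz--Zippel term)---picks out exactly the $n-2$ further points on $Z(l_1)$, and since $a,v_1,\dots,v_{n-1}$ are independent (with the relevant scalars nonzero because $a\notin Z(f)$) the span computed in step~4 has dimension $n-1$ and hence equals $Z(l_1)$; iterating identifies all $k$ hyperplanes; step~5 reads the $\alpha_i$ off as root multiplicities of the already-reconstructed $g(t)=f(a+tv_1)$; and step~6 recovers $\lambda$.

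For the complexity, step~1 reconstructs $n-1$ univariate polynomials of degree $\le d$ by interpolation, i.e.\ $O(nd)$ black box calls, the ensuing root computations being delegated to the assumed univariate factorization routine and costing no black box calls; identifying the hyperplanes runs the line test $O(k^2n)$ times at $O(k)$ calls each (or $O(1)$ with randomization), and $k\le d$; step~5 reuses the data of step~1; step~6 is one call. Hence $\mathrm{poly}(n,d)$ calls in all. For the rational case, take $S=\{0,1,\dots,N-1\}$ with $N=\mathrm{poly}(n,d)$, so $a$ and the $v_i$ have bit size $O(\log N)$ and the step-1 queries $a+tv_i$ with $t\in\{0,\dots,d\}$ have bit size $O(\log(dN))$. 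The decisive observation is that if $f$ admits a factorization with $l_i\in\qq[x_1,\dots,x_n]$ of coefficient bit size $\le s$, then each intersection point found at step~1 equals the \emph{rational} point $a-(l_i(a)/l_i(v_i))\,v_i$, of bit size $O(s+\log(nN))=\mathrm{poly}(n,d,s)$; consequently the step-3 queries $ta_1+(1-t)a_j$ (with $t$ a small integer, or a small random rational) and the step-6 query (taken again in $S^n$) are all rational of bit size $\mathrm{poly}(n,d,s)$.

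The only genuinely nonroutine points---the places where the hypotheses are actually used---are the non-vanishing claims for the bad-event polynomials, above all that $l_j(a)l_{j'}(v_i)-l_{j'}(a)l_j(v_i)\not\equiv 0$ when $l_j$ and $l_{j'}$ are not proportional (which follows from unique factorization, or from a rank computation on the $2\times n$ matrix of coefficients), and, for the last assertion, the remark that a \emph{rational} factorization forces the step-1 intersection points to be rational of controlled size. Everything else is standard interpolation, Schwarz--Zippel, and bit-size bookkeeping.
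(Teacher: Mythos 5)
Your proposal is correct and follows essentially the same route as the paper's proof: defer correctness to the analysis preceding the theorem, identify the same list of bad events (nonvanishing of $f(a)$, of the $l_j(v_i)$, of the bilinear forms $l_j(a)l_{j'}(v_i)-l_{j'}(a)l_j(v_i)$ coming from non-proportionality, and of the $n\times n$ determinant of $a,v_1,\dots,v_{n-1}$), bound them by Schwarz--Zippel and union bound, and for the rational case observe that the step-1 intersection points $a-(l_i(a)/l_i(v_i))v_i$ are rational of bit size controlled by $s$, $n$, $d$. You add slightly more bookkeeping (explicit event (5) for step 6 and an explicit call count) but the substance and the key non-vanishing observation are exactly the paper's.
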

\begin{proof}
The correctness of the algorithm follows from the above analysis.
Let us focus therefore on the case $K=\qq$ of the theorem.
This result relies on a standard application of the Schwarz-Zippel
lemma.
More precisely,
as explained in the analysis of the first two steps, we want to pick
a random point $a$ such that $f(a) \neq 0$ and random vectors
$v_1,\ldots,v_{n-1}$ that do not belong to any of the hyperplanes.
Moreover, the line $\Delta_i$
defined at step~1 should not go through the intersection of two hyperplanes.
Let us pick the coordinates of $a$ and of the $v_i$ independently at random
from a finite set $S \subseteq \qq$. By the Schwarz-Zippel lemma,
$\Pr[f(a)=0] \leq k/|S| \leq d/|S|$;
and for any linear form $l_j$
we have $\Pr[l_j(v_i)=0] \leq 1/|S|$.
As to $\Delta_i$, let us bound for instance the probability of going
through the intersection of the first two hyperplanes.
Since $\Delta_i$ is the line of direction $v_i$ going through~$a$,
it suffices to make sure that $l_2(a)l_1(v_i) - l_1(a)l_2(v_i) \neq 0$.
By the Schwarz-Zippel lemma this happens with probability at least
$1-2/|S|$.

Another constraint arising in the analysis of steps 3 and 4 is that $a,v_1,\ldots,v_{n-1}$ should be linearly independent. By the Schwarz-Zippel lemma, the
corresponding determinant vanishes with probability at most~$n/|S|$.

Note that the bounds obtained so far are independent of $s$.
This parameter comes into play when we compute the intersections
$\Delta_i \cap Z(f)$ at step~1. Recall that we do this by
finding the roots of the univariate polynomial $g(t)=f(a+tv_i)$.
The roots are:
$$t_1=-l_1(a)/l_1(v_i),\ldots,t_k=-l_k(a)/l_k(v_i).$$
Then at step 3 we call the black box at points belonging to lines
going through two of the $k(n-1)$ intersections points found at step~1.
\end{proof}

The algorithm presented in this section relies on a simple and appealing
geometric picture, but it suffers from a drawback compared to the algorithms
of sections~\ref{factorization} and~\ref{bivariate}:
\begin{remark} \label{pointsize}
  Assume that $K = \qq$. The above algorithm may need to call the black box
  for $f$ at algebraic (non rational) points in the case where the linear forms
  $l_i$ do not have rational coefficients.
  This is due to the fact that we call the black box at points
  that lie on the hyperplanes $l_i=0$.

  By contrast, the algorithms of sections~\ref{factorization}
  and~\ref{bivariate} always call the black box at integer points
  even when $f$ has algebraic (non rational) coefficients.
  To see why this is true for the algorithm of Section~\ref{bivariate},
  note that the main use of the black box is for performing bivariate
  interpolation. In Section~\ref{factorization}, the black box
  is used
  {%
    only} for the computation of the Lie algebra of $f$
  following Lemma~22 of~\cite{kayal2012affine}.
  {%
    More details on the black box calls performed by our
    three algorithms can be found in the appendix.}
\end{remark}

\appendix

\normalsize

\section{Appendix: Cost of calls to the black box} \label{bb}

In this section we compare the number of calls to the black box
made by our three algorithms { (for the cost of other operations, see Appendix~\ref{other})}. A more thorough analysis would also
take into account the size of points at which the black box is queried
(for this, Remark~\ref{pointsize} would become especially relevant).

It turns out that the hyperplane algorithm of Section~\ref{hyperplane}
makes fewer calls to the black box than the other two.
We also analyze these algorithms
in the ``white box'' model, where we have access to an arithmetic
circuit computing the input polynomial $f$. In that model, the
cost of function evaluations becomes smallest for the 
Lie-theoretic algorithm of Section~\ref{factorization}.

\subsection{Lie-theoretic algorithm} \label{bb_lie}

In Section~\ref{factorization}, the black box
  is used   only  for the computation of the Lie algebra of $f$.
  By Lemma~\ref{lieP}, this boils down to the determination of linear
  dependence relations between the $n^2$ polynomials
  $x_j\frac{\partial f}{\partial x_i}.$
  The general problem of finding linear dependence relations between
  polynomials given by black box access is solved by the following lemma
  (see appendix~A1 of~\cite{kayal11} for a proof).
 \begin{lemma}[Lemma~14 in~\cite{kayal2012affine}]
   Let $(f_1(x),f_2(x),\ldots,f_m(x))$ be an $m$-tuple
  of $n$-variate polynomials. Let ${\cal P} = \{a_i;\ 1 \leq i \leq m\}$
  be a set of $m$ points in $K^n$. Consider the $m \times m$ matrix
  $$M=(f_j(a_i))_{1 \leq i,j \leq m}.$$
  With high probability over a random choice of ${\cal P}$, the nullspace
  of $M$ consists precisely of all the vectors
  $(\alpha_1,\ldots,\alpha_m) \in K^m$ such that
  $$\sum_{i=1}^m \alpha_i f_i(x) \equiv 0.$$
  \end{lemma}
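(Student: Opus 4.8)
The plan is a routine application of the Schwarz--Zippel lemma, once we isolate the right nonvanishing polynomial. Let $V \subseteq K^m$ denote the space of \emph{true} linear dependence relations, $V = \{(\alpha_1,\ldots,\alpha_m) : \sum_i \alpha_i f_i \equiv 0\}$, and put $r = \dim V$. For \emph{any} choice of $\cal P$ one has $V \subseteq \ker M$: if $\sum_j \alpha_j f_j \equiv 0$ then in particular $\sum_j f_j(a_i)\alpha_j = 0$ for every $i$, i.e. $M\alpha = 0$. Since this inclusion is always valid, it suffices to prove that, with high probability over $\cal P$, $\ker M$ has dimension exactly $r$, equivalently $\rk M = m - r$.

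First I would record that $m - r$ equals the dimension of the span of $f_1,\ldots,f_m$ inside $K[x_1,\ldots,x_n]$: the linear map $\alpha \mapsto \sum_i \alpha_i f_i$ has kernel $V$ and image exactly that span. So after reindexing we may assume $f_1,\ldots,f_{m-r}$ is a maximal linearly independent subfamily. Introduce $m-r$ fresh blocks of $n$ variables $y_1,\ldots,y_{m-r}$ and set
\[
D(y_1,\ldots,y_{m-r}) = \det\bigl(f_j(y_i)\bigr)_{1 \le i,j \le m-r}.
\]
The key claim is that $D$ is not the zero polynomial. I would prove this by induction on $k$ for the statement: if $g_1,\ldots,g_k \in K[x_1,\ldots,x_n]$ are linearly independent then $\det(g_j(y_i))_{1\le i,j\le k} \not\equiv 0$. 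The case $k=1$ is the fact that a nonzero polynomial over the infinite field $K$ has a non-root. For the inductive step, fix points $c_1,\ldots,c_{k-1}$ with $\det(g_j(c_i))_{1\le i,j\le k-1}\ne 0$ (induction hypothesis), and expand the $k\times k$ determinant with rows indexed by $c_1,\ldots,c_{k-1},x$ along its last row: this exhibits it as a linear combination of $g_1(x),\ldots,g_k(x)$ whose coefficient of $g_k$ is precisely that nonzero $(k-1)\times(k-1)$ minor. Hence the combination is a nonzero polynomial (the $g_i$ being linearly independent) and so is not identically zero as a function on $K^n$; pick $c_k$ witnessing this.

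With the claim in hand the proof finishes quickly. The total degree of $D$ is at most $(m-r)\max_j \deg f_j$. Drawing the coordinates of the points $a_1,\ldots,a_{m-r}$ (among the $m$ points of $\cal P$; the remaining $r$ points are irrelevant here) independently and uniformly from a finite set $S \subseteq K$, the Schwarz--Zippel lemma gives $D(a_1,\ldots,a_{m-r}) \ne 0$ with probability at least $1 - (m-r)\max_j\deg f_j / |S|$. When this event holds, the $(m-r)\times(m-r)$ submatrix $(f_j(a_i))_{1\le i,j\le m-r}$ of $M$ is invertible, so $\rk M \ge m - r$; combined with $\rk M \le m - r$ (from $V\subseteq\ker M$) this yields $\rk M = m-r$ and hence $\ker M = V$ exactly. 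Taking $|S|$ large makes the failure probability as small as desired. The only genuine content is the nonvanishing claim for $D$ — really the classical fact that linearly independent functions on an infinite field admit a nonsingular sampling matrix — and that is the one step I expect to be worth spelling out; everything else, including the mild use of infiniteness of $K$ (harmless, since the lemma is invoked in characteristic zero), is bookkeeping.
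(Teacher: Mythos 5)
The paper does not supply its own proof of this lemma; it is stated with a citation to Lemma~14 of~\cite{kayal2012affine} and a pointer to Appendix~A1 of~\cite{kayal11}, so there is no in-paper argument to compare against. Your proof is correct and is the standard one: the inclusion $V \subseteq \ker M$ is immediate, and you reduce the reverse inclusion to the nonvanishing of the generalized Vandermonde polynomial $D(y_1,\ldots,y_{m-r})=\det\bigl(f_j(y_i)\bigr)$ followed by Schwarz--Zippel. The inductive argument for $D\not\equiv 0$ via Laplace expansion along the last row is sound and correctly isolates where linear independence and infiniteness of $K$ enter. The only thing worth making fully explicit is the sampling model --- that ``random choice of $\cal P$'' means drawing the coordinates of each $a_i$ independently and uniformly from a fixed finite set $S\subseteq K$, which is the convention the paper uses throughout and what your Schwarz--Zippel bound $1-(m-r)\max_j\deg f_j/|S|$ presupposes.
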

  We therefore need to evaluate the $n$ polynomials
  $\partial f / \partial x_i$ at $n^2$ random points.
  Note however that we have only access to a black box for $f$ rather than
  for its partial derivatives. As is well known, it is easy to
  take care of this issue by polynomial interpolation. Suppose indeed
  that we wish to evaluate $\partial f / \partial x_i$ at a point $a$.
  Then we evaluate $f$ at $d+1=\deg(f)+1$ points on the line $\Delta$ which
  goes through $a$ and is parallel to the $i$-th basis vector.
  From these $d+1$ values we can recover $f$ and $\partial f / \partial x_i$
  on $\Delta$.
  We conclude that the Lie-theoretic algorithm performs $O(dn^3)$ calls
  to the black box for $f$.

  {%
    These $n^3$ polynomial interpolations also have a cost in terms
    of arithmetic operations, but it is relatively small. Suppose indeed
    that we wish to compute ${\partial f} / {\partial x_1}$ at a point
    $a=(a_1,\ldots,a_n)$ with $a_1 \neq 0$. Consider the univariate
    polynomial $g(x)=f(a_1x,a_2,\ldots,a_n)$. It suffices to compute
    $g'(1)=a_1 {\partial f} / {\partial x_1}(a)$. We can obtain
    $g'(1)$ as a fixed linear combination of $g(0),g(1),\ldots,g(d)$.
    One polynomial interpolation therefore requires one linear combination
    of values of $f$ and one division (by $a_1$).
    {
      We conclude for use in Appendix~\ref{other_lie} that the arithmetic
      cost of these $n^3$ interpolations is $O(n^3d)$.}

  The Lie-theoretic algorithm
  admits an interesting optimization when the black box
  is implemented by an arithmetic circuit. Suppose indeed that
  we have access to an arithmetic circuit of size $s$ computing $f$
  (this is the so-called ``white box'' model). The above analysis
  translates immediately into an arithmetic cost of order $sdn^3$ for
  the evaluation of the partial derivatives of $f$ at our $n^2$ random points.
  But one can do much better thanks to the classical result
  by Baur and Strassen~\cite{baur83} (see also~\cite{morgenstern85}),
  which shows that the $n$ partial derivatives of an arithmetic circuit
  of size $s$ can be evaluated by a single arithmetic circuit
  of size $O(s)$. This reduces the cost of evaluations from $O(sdn^3)$ to
  $O(sn^2)$.
  {
    Moreover, the arithmetic cost of interpolations drops from $O(n^3d)$ to 0
    since we do not perform any interpolation in the white box model.}

  \subsection{Bivariate projections} \label{bb_bivariate}

  The algorithm of Section~\ref{bivariate} recovers a factorization of
  the input polynomial $f$ from the factorization of $n-2$ bivariate
  projections $g_1,\ldots,g_{n-2}$. The black box is used only to obtain
  each $g_j$ in dense form by interpolation.\footnote{There is also an additional call to the black box for verification of the final result, see Step~3 in Section~\ref{randomproj}.} This can be done deterministically
  by evaluating $g_j$ on any set of the form $S \times S$ where $|S| = d+1$.
  We therefore need a total of  $O(nd^2)$ function evaluations.
  Note that this is only $O(n^3)$ for $d=n$, i.e., smaller than the
  number of black box call performed by the Lie-theoretic algorithm.
  In general the bounds $dn^3$ and $nd^2$
  obtained for our first two algorithms are not
  comparable since $d$ could be  much larger than $n$ (this can happen
  when the exponents $\alpha_i$ in~(\ref{problem}) are large enough).

  There is no obvious improvement to this analysis of our second algorithm
  in the ``white box'' model described in Section~\ref{bb_lie}:
  the  $O(nd^2)$ function evaluations
  translate into an arithmetic cost of order $snd^2$.
    Now the white box version of the Lie-theoretic algorithm
  becomes more interesting from the point of view
  of the cost of function evaluations: as explained above, this
  cost is only $O(sn^2)$. By contrast, this cost is $\Omega(sn^3)$ for
  the algorithm of Section~\ref{bivariate} since $d \geq n$.

  \subsection{The hyperplane algorithm} \label{bb_hyperplane}

In order to factor an input polynomial
$f(x)=\lambda.l_1(x)^{\alpha_1} \cdots l_k(x)^{\alpha_k}$,
this algorithm determines the hyperplanes $H_i=Z(l_i)$ together with their
multiplicities $\alpha_i$.
The black box calls are performed at steps 1, 3 and 6 of the algorithm.
We'll focus on steps 1 and 3 since Step 6 performs only one call to
the black box.

At Step~1 we compute the intersection of $n-1$ lines
$\Delta_1,\ldots,\Delta_{n-1}$ with $Z(f)$, the zero set of $f$.
For this we need to interpolate $f$ on each line; this requires a total
of $(n-1)(d+1)$ calls to the black box.

The determination of a single hyperplane of $Z(f)$
is explained at step~3 of the algorithm, which we repeat here for convenience
($\{a_2,\ldots,a_{k(n-1)}\}$ are the intersection points found at Step~1):
\begin{enumerate}
\item[3.] Find $n-2$ points $b_2,\ldots,b_{n-1}$ in the set
  $\{a_2,\ldots,a_{k(n-1)}\}$ such that each line $(a_1 b_j)$ is included
  in $Z(f)$.
\end{enumerate}
As explained in Section~\ref{hyperplane}, the test $(a_1 b_j) \subseteq Z(f)$
can be implemented with one call to the black box at a random point
on the line $(a_1 b_j)$.
This test is repeated at most $k(n-1)$ times. We therefore need $O(kn)$ calls
to determine a single hyperplane. There are $k$ hyperplanes to determine,
for a total cost of order $k^2n$. 
We conclude that this algorithm makes $O(dn+k^2n)$ calls to the black box.
The two terms $dn$ and $k^2n$ in this bound are in general incomparable
since $d \geq k$
is the only relation between $d=\deg(f)$ and the number $k$ of distinct
factors.

In order to compare with the Lie-theoretic algorithm we whould
set $k=n$ since that algorithm applies only in this situation.
The cost of the hyperplane algorithm becomes $O(dn+n^3)$;
this is smaller than the $O(dn^3)$ bound obtained for the Lie-theoretic
algorithm. Note however that the latter algorithm becomes cheaper in the white
box model: as explained in Section~\ref{bb_lie} the arithmetic cost of function
evaluations is only $O(sn^2)$ when $f$ is given by an arithmetic circuit
of size $s$.
This should be compared to a cost of order $s(dn+n^3)$ for the hyperplane
algorithm (like the bivariate algorithm, it does not seem to admit
any interesting optimization in the white box model).

Finally, the hyperplane algorithm should be compared to bivariate
projections. In number of calls to the black box,
the latter algorithm is always as expensive or more
expensive than the former (compare $dn+k^2n$ to $d^2n$).

\section{Appendix: Cost of other operations} \label{other}

{
  In this section we continue the analysis of our three algorithms.
  Appendix~\ref{bb} dealt with the number of calls to the black
  box. Here we estimate the cost of ``other operations'', which
  consist mostly of:
  \begin{itemize}
  \item arithmetic operations, in $K$ or in an extension of $K$.
  \item certain non-algebraic steps such as eigenvalue computations
    or the factorization of univariate polynomials.
  \end{itemize}
  The bounds that we give should only be viewed as very rough estimates
  of the algorithms' complexity since we do not perform a full analysis
  at the level of bit operations.\footnote{ Note that a complexity analysis at the level of bit operations is also omitted from the paper by Kalftofen and Trager~\cite{KalTra90} on black box factorization.}}}

\subsection{Lie-theoretic algorithm} \label{other_lie}

In this section we analyze more precisely the algorithm of
Section~\ref{general}. We'll focus first on the complexity
of deciding the existence of a suitable factorization over $\overline{K}$.
This is done in the first three steps of the algorithm.
Note that the corresponding steps for the case of equal exponents
(Section~\ref{equal}) only differ by the presence of $n-1$ trace
computations. The cost of trace computations turns out to be negligible,
so this analysis applies to the two versions of our algorithm.

At Step 1 of the algorithm we compute a basis of the Lie algebra of $f$.
The Lie algebra is the nullspace of a certain matrix $M$ of size $m=n^2$
which we have already computed as explained in Appendix~\ref{bb_lie}.
A basis of the nullspace can be computed with $O(m^3)$ arithmetic operations
by Gaussian elimination, and with $O(m^{\theta})$ operations
using fast linear algebra~\cite{BCS}.
Here $\theta$ denotes any exponent strictly larger than $\omega$,
the exponent of matrix multiplication.

At Step 3 we first check that the matrices $B_1,\ldots,B_{n-1}$ commute,
where $B_1,\ldots,B_{n-1}$ is the basis of the Lie algebra found at Step~1.
This can be done in $O(n^{2+\omega})$ arithmetic operations.
This is negligible compared to the cost $O(n^{2\theta})$ of the first step
since $\theta > \omega \geq 2$.

Then we check that the $B_i$ are all diagonalizable.
Recall from Section~\ref{diagback} that $B_i$ is diagonalizable over
$\overline{K}$ iff its minimal polynomial $m_i$ has only simple roots.
The minimal polynomial can be computed in $O(n^{\theta})$ arithmetic
operations~\cite{giesbrecht95,storjohann01}. Then we need to check
that $\mathrm{gcd}(m_i,m'_i)=1$. The cost of computing the gcd is
negligible compared to $n^{\theta}$.
The cost of the $n-1$ diagonalizability tests
is $O(n^{1+\theta})$, which is again negligible compared to Step~1.
We conclude that the existence of a suitable factorization of $f$ can
be decided in $O(n^{2\theta})$ arithmetic operations.

At Step~4 we perform a simultaneous diagonalization of the $B_i$.
As suggested in
Section~\ref{simulback} and in Section~\ref{factorization},
this can be done by diagonalizing a random combination $R$ of the $B_i$'s.
For this, as recalled in Section~\ref{diagback}
we can first compute the eigenvalues $\lambda_1,\ldots,\lambda_n$
of $R$ (a non algebraic step). Then we compute a basis of $\ker(R-\lambda_iI)$
for all $i$. One basis can computed in $O(n^{\theta})$ arithmetic operations,
so $O(n^{1+\theta})$ is a rough estimate on the number of arithmetic operations
needed to compute a transition matrix $T$ (we will not try to improve it
since it is dominated by the cost of Step 1).
Note that these arithmetic operations take place in
$K[\lambda_1,\ldots,\lambda_n]$, so counting such an operation as ``one step''
is probably most appropriate when the $\lambda_i$ lie in $K$, or when we
work with approximations of the $\lambda_i$.

Once $T$ is known, the $n-1$ diagonal matrices $D_i=T^{-1}B_iT$ can be computed
at a cost of $O(n^{1+\theta})$ arithmetic operations. Then, as explained
at Step 5, we obtain the exponents $\alpha_i$ as the orthogonal of the
space spanned by the $D_i$ in the space of diagonal matrices.
Alternatively, the $\alpha_i$ can be obtained without knowledge of $T$
as explained after Proposition~\ref{eigenexp}: the $n$ exponents
are the eigenvalues of matrix $H$ which is obtained as the unique
solution of a system of $\Theta(n^3)$ equations in $n^2$ unknowns.
This approach looks rather expensive since solving a {\em square} system
in $n^2$ unknowns would already take $O(n^{2\theta})$ arithmetic operations.

The above analysis can be summarized as follows.
\begin{proposition} \label{cost_lie}
  The algorithm of Section~\ref{general} decides in { $O(n^{2\theta}+n^3d)$}
  arithmetic operations whether the input polynomial $f$ admits
  a factorization of the form~(\ref{problem}) over $\overline{K}$.
  If such a factorization exists, it can be computed within the same
  number of arithmetic operations and the additional computation
  of the eigenvalues of a matrix $R \in M_n(K)$.

  {
    In the white box model of Appendix~\ref{bb_lie}, the number of
    arithmetic operations drops
    from $O(n^{2\theta}+n^3d)$ to $O(n^{2\theta})$.}
\end{proposition}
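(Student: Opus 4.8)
The plan is to assemble the stated bounds from the step-by-step cost analysis already carried out in the paragraphs preceding the proposition; the proof is therefore essentially bookkeeping. First I would recall that the algorithm of Section~\ref{general} splits into a decision phase (Steps~1--3) and a reconstruction phase (Steps~4--6), and treat them in turn.

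For the decision phase the dominant contribution is Step~1: computing a basis of $\mathfrak{g}_f$ amounts to finding the nullspace of the $n^2\times n^2$ matrix $M$ from Lemma~\ref{lieP}. Assembling $M$ costs $O(n^3d)$ arithmetic operations for the interpolations that evaluate the partial derivatives of $f$ (as analyzed in Appendix~\ref{bb_lie}), and extracting a nullspace basis costs $O((n^2)^\theta)=O(n^{2\theta})$ by fast linear algebra. Steps~2 and~3 are negligible by comparison: the dimension test is free, the commutation test on the $n-1$ basis matrices costs $O(n^{2+\omega})$, the $n-1$ diagonalizability tests cost $O(n^{1+\theta})$ via minimal-polynomial computations followed by a squarefree check, and the $n-1$ trace computations present in the unequal-exponents version are dominated by all of the above. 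Summing yields $O(n^{2\theta}+n^3d)$ for Steps~1--3, which is the first assertion.

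For the reconstruction phase I would observe that Step~4 (simultaneous diagonalization by diagonalizing a random combination $R$ of the $B_i$) requires the eigenvalues of $R$ --- a non-algebraic step, which is precisely the ``additional computation'' singled out in the statement --- followed by only $O(n^{1+\theta})$ further arithmetic operations to form a transition matrix $T$ and the diagonal matrices $D_i=T^{-1}B_iT$; Step~5 then recovers the exponent vector $\alpha$ as the orthogonal complement of $\mathrm{Span}(D_1,\dots,D_{n-1})$ inside the space of diagonal matrices, which is negligible, and Step~6 is a single black-box evaluation fixing $\lambda$. Everything here stays within $O(n^{2\theta}+n^3d)$ arithmetic operations --- even the alternative route through Proposition~\ref{eigenexp}, which solves a system of $\Theta(n^3)$ equations in $n^2$ unknowns at cost $O(n^{2\theta})$ --- proving the second assertion. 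For the white-box refinement I would invoke the Baur--Strassen theorem exactly as in Appendix~\ref{bb_lie}: when $f$ is given by a size-$s$ circuit, its $n$ first partial derivatives are computed by a single circuit of size $O(s)$, so no polynomial interpolation is performed and the $n^3d$ term coming from those interpolations disappears, leaving the $O(n^{2\theta})$ cost of the nullspace and the subsequent linear algebra.

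The argument has no genuinely hard step; its only delicate point is the accounting convention. Eigenvalue computation and univariate factorization are deliberately \emph{not} counted as arithmetic operations (hence the explicit mention of the eigenvalues of $R$ in the statement), and when we do charge one unit per arithmetic operation we are implicitly working over $K$, over $K[\lambda_1,\dots,\lambda_n]$, or with approximations of the $\lambda_i$, as discussed just before the proposition. Keeping this convention straight --- and correctly folding in the interpolation cost imported from Appendix~\ref{bb_lie} --- is what the proof really comes down to.
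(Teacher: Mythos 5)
Your proposal is correct and follows essentially the same step-by-step accounting as the paper's own analysis: nullspace of the $n^2\times n^2$ matrix from Lemma~\ref{lieP} at cost $O(n^{2\theta})$, the $O(n^3d)$ interpolation cost imported from Appendix~\ref{bb_lie}, the subsidiary checks at $O(n^{2+\omega})$ and $O(n^{1+\theta})$, the eigenvalue computation of $R$ flagged as the non-arithmetic step, and Baur--Strassen killing the interpolation term in the white-box model. No substantive difference from the paper's argument.
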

{
  The term $n^3d$ in Proposition~\ref{cost_lie} is due to the arithmetic
  cost of interpolations as explained in Appendix~\ref{bb_lie}.}
Towards a more thorough analysis of this algorithm one could attempt
to estimate its bit complexity, assuming for instance for simplicity
that $f$ admits
a factorization with the $l_i$ in $\zz[x_1,\ldots,x_n]$.

\subsection{Bivariate projections} \label{other_bi}

The algorithm from Section~\ref{bivariate} recovers a factorization
of $f$ from $n-2$ bivariate factorization.
A state of the art algorithm for the latter task
can be found in~\cite{lecerf10}, where the following reduction
from bivariate to univariate factorization is provided.
\begin{theorem} \label{lecerf}
  Let $\mathbb{K}$ be a field of characteristic 0, and $F \in \mathbb{K}[x,y]$
  a bivariate polynomial of degree $d_x$ in the variable $x$ and $d_y$ in
  the variable $y$.
  There is a probabilistic algorithm that factors $F$ in $O((d_xd_y)^{1.5})$
  arithmetic operations.
  Moreover, the algorithm performs irreducible factorizations of polynomials
  in $\mathbb{K}[y]$ whose degree sum is at most $d_x+d_y$.
\end{theorem}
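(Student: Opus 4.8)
The plan is to follow the classical ``specialize--lift--recombine'' pattern for bivariate factorization, using Hensel lifting for the lifting step and a single linear system for the recombination step; the contribution of~\cite{lecerf10} is to carry out recombination by linear algebra (rather than by an exponential search over subsets of lifted factors) and to tune the parameters so that every substep fits inside $O((d_xd_y)^{1.5})$ operations. First I would reduce to a convenient normal form. Replacing $F$ by its squarefree part with respect to $x$ (a $\gcd$ computation over $\mathbb{K}(y)$) and pulling out the content in $\mathbb{K}[y]$, one may assume $F$ is primitive and separable as an element of $\mathbb{K}(y)[x]$; the multiplicities stripped off here are reattached at the end. A random shift $y\mapsto y+a$, with $a$ drawn from a large finite subset of $\mathbb{K}$, then makes $F(x,0)$ separable of $x$-degree exactly $d_x$ with $\mathrm{lc}_x(F)(0)\neq0$: the polynomial $\mathrm{disc}_x(F)\cdot\mathrm{lc}_x(F)\in\mathbb{K}[y]$ is nonzero of degree $O(d_xd_y)$, so by the Schwarz--Zippel lemma it does not vanish at $y=0$ with high probability. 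This randomization accounts for the word ``probabilistic'' in the statement.

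The core of the algorithm then runs as follows. Factor $f_0(x):=F(x,0)\in\mathbb{K}[x]$ into monic irreducibles $f_0=\mathrm{lc}_x(F)(0)\cdot p_1\cdots p_r$, and also factor $\mathrm{lc}_x(F)\in\mathbb{K}[y]$; these are the only two univariate factorizations performed, and their degrees sum to at most $d_x+d_y$, which is the ``moreover'' clause. Using the pairwise coprimality of the $p_j$ one Hensel-lifts this factorization modulo $y^N$, for a precision $N$ of order $d_x+d_y$ (carrying the leading coefficient along to handle the fact that $F$ need not be monic in $x$), obtaining power-series factors $\hat F_1,\dots,\hat F_r\in\mathbb{K}[[y]][x]$ with $F\equiv c\cdot\hat F_1\cdots\hat F_r\pmod{y^N}$. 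By Gauss's lemma and uniqueness of factorization, each irreducible factor of $F$ over $\mathbb{K}$ is congruent modulo $y^N$ to a product of a sub-collection of the $\hat F_i$; the whole difficulty is to recover the correct partition of $\{1,\dots,r\}$. Lecerf's recombination avoids enumerating the $2^r$ subsets: working with logarithmic derivatives one writes $\frac{\partial_x F}{F}\equiv\sum_{i=1}^r\frac{\partial_x\hat F_i}{\hat F_i}\pmod{y^N}$ and shows that the admissible subsets $S\subseteq\{1,\dots,r\}$ are precisely the supports of the vectors in the kernel of an explicit $\mathbb{K}$-linear map built from the truncated $\hat F_i$. Computing that kernel is a linear-algebra task whose size is controlled by $r\le d_x$ and $N=O(d_x+d_y)$, and with fast power-series arithmetic and fast (structured, Hankel-type) linear algebra both the Hensel lifting and this solve come in under $O((d_xd_y)^{1.5})$. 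Once the partition is known, multiplying the corresponding $\hat F_i$ to a high enough power of $y$ and interpolating yields the true factors exactly (their $y$-degrees are bounded by $d_y$), after which the stored multiplicities and the shift $y\mapsto y-a$ are undone.

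The step I expect to be the main obstacle is the recombination analysis: one must prove that the kernel of the logarithmic-derivative map is spanned exactly by the indicator vectors of the genuine factors, with no spurious solutions, and simultaneously keep the precision $N$ and the dimensions of the linear system small enough that the solve costs $O((d_xd_y)^{1.5})$ rather than, say, $O(d_x^{\omega}d_y)$. This needs a careful valuation analysis of $\partial_x F/F$ and of the pairwise resultants of the $\hat F_i$ (to bound how large $N$ must be for the kernel to stabilize), together with a structured description of the map so that fast linear algebra applies. The remaining ingredients---squarefree decomposition, content extraction, Hensel lifting to precision $N$, and the final exact reconstruction---are standard and each also cost $O((d_xd_y)^{1.5})$, so no further univariate factorization beyond those of $f_0$ and $\mathrm{lc}_x(F)$ is ever invoked.
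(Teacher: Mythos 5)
The paper does not prove Theorem~\ref{lecerf}: it is quoted verbatim as a result of Lecerf, cited from~\cite{lecerf10}, and is used as a black box to estimate the cost of the bivariate-projection algorithm of Section~\ref{bivariate}. There is therefore no ``paper's own proof'' to compare against. Your sketch is, however, a faithful high-level account of what Lecerf's algorithm actually does: reduce to a primitive separable normal form, factor the specialization of $F$, Hensel-lift to a controlled precision, and then recombine the lifted factors by computing the kernel of a linear map built from logarithmic derivatives rather than by exhaustive subset search, with all substeps kept within $O((d_xd_y)^{1.5})$. You also correctly identify the genuine difficulty --- proving that the recombination kernel is spanned exactly by the indicator vectors of the true factors at the chosen precision, and making the linear algebra cheap enough --- which you honestly leave as the open part of the sketch.

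One small discrepancy worth fixing: the theorem as stated says the univariate factorizations performed are of polynomials \emph{in $\mathbb{K}[y]$} with degree sum at most $d_x+d_y$, whereas your sketch factors $F(x,0)\in\mathbb{K}[x]$ (degree $d_x$) and $\mathrm{lc}_x(F)\in\mathbb{K}[y]$ (degree $\le d_y$). This is essentially a relabeling of variables and does not change the content of the degree bound, but if you want to match the theorem literally you should set up the specialization so that the polynomials being factored live in $\mathbb{K}[y]$, e.g., specialize $x$ rather than $y$, or simply note that the statement is symmetric in $x$ and $y$.
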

We have omitted the cost of generating random field elements from the statement
of the theorem. A deterministic version of this result is also provided
in~\cite{lecerf10}, with a slightly higher arithmetic cost:
$O((d_xd_y)^{(\theta+1)/2})$ instead of $O((d_xd_y)^{1.5})$.
The univariate factorizations in Theorem~\ref{lecerf} can be viewed as
an analogue of the eigenvalue computations in 
Proposition~\ref{cost_lie}.

For an input polynomial $f$ with coefficients in $K$, it may be the case
that a factorization into products of linear forms exists only in an
extension $\mathbb{K}$ of $K$. We will therefore need to apply
Theorem~\ref{lecerf} to such a field extension, and the arithmetic operations
in Theorem~\ref{lecerf} will also take place in this field extension.
We have already made a similar remark for the algorithm
of Section~\ref{factorization} in Section~\ref{other_lie}.

If the input $f$ has degree $d$, we can take $d_x=d_y=d$ and we conclude
that the algorithm of Section~\ref{bivariate} will make $O(nd^3)$
arithmetic operations. For $d=n$ this is smaller than the arithmetic cost
$O(n^{2\theta})$ in Proposition~\ref{cost_lie}, but the latter bound
becomes smaller if $d$ significantly larger than $n$.

A complete analysis should also take the cost of univariate factorizations
into account. Assume for instance that $K=\mathbb{K}=\qq$.
A polynomial time algorithm for this task was first by given
Lenstra, Lenstra and Lovasz~\cite{lenstra82}.
This remains a relatively expensive task despite several improvements
(see~\cite{aecf} for an exposition and more references).
However, we only need to find the linear factors of $F$ (together with
their multiplicities). This boils down to finding the rational roots
of a univariate polynomial, a task which
(as already pointed out in Section~\ref{diagback}) has an essential quadratic
binary cost (\cite{aecf}, Proposition~21.22).

As an alternative to Theorem~\ref{lecerf} one may use the absolute
factorization algorithm by Ch\`eze and Lecerf~\cite{ChezeLecerf07}.
This algorithm only performs arithmetic operations (no univariate factorization
is involved). Moreover, the number of arithmetic operations is barely higher:
$\tilde{O}(d^3)$ instead of $O(d^3)$, where the $\tilde{O}$ notation
hides logarithmic terms. We refer to~\cite{ChezeLecerf07} for a more precise
statement of their result and a description of the output representation.
One advantage of their algorithm is that all arithmetic operations take place
in the coefficient field $K$ of the input polynomial, even if the factors
only exist in a field extension $\mathbb{K}$.
Finally, we note that their algorithm only applies to squarefree
bivariate polynomials. For a reduction from the general case to the squarefree
case we refer to Section~4.2 of~\cite{lecerf10}.

So far, we have not  addressed the arithmetic cost of converting from black box
representation to dense bivariate representation. As pointed out in
Section~\ref{bb_bivariate}, this can be done by interpolating
each of the $n-2$ bivariate polynomials on a set of size $(d+1)^2$.
There are several ways of doing this at negligible cost compared
to dense bivariate factorization.

First, recall that a univariate polynomial of degree $d$ can be interpolated
from its values at roots of unity in $O(d \log d)$ arithmetic operations
using the Fast Fourier Transform. The cost of univariate interpolation
at an arbitrary set of $d+1$ points is a little higher but remains
$\tilde{O}(d)$, see Section~10 of~\cite{gathen13} for details.

Returning to bivariate polynomials, one option is to use the two-dimensional
FFT. Its cost remains $O(N \log N)$, where $N$ is the number of interpolation
points. Here, $N=(d+1)^2$ and we interpolate on $S \times S$
where $S$ is the set of $(1+d)$-th roots of unity.
Another option is to perform the Kronecker substitution $y=x^{d+1}$
and interpolate the polynomial $g(x)=f(x,x^{d+1})$
using one of the aforementioned univariate methods. 
The coefficients of $f$ can be recovered uniquely from those of $g$.

\subsection{The hyperplane algorithm}

Recall that the algorithm proposed in Section~\ref{hyperplane}
factors an input polynomial
$f(x)=\lambda.l_1(x)^{\alpha_1} \cdots l_k(x)^{\alpha_k}$
from $n-1$ univariate polynomial factorizations.
Each univariate polynomial is of the form $g(t)=f(a+tv_i)$ and its
coefficients must be determined by interpolation at Step~1.
As already mentioned in
Section~\ref{other_bi}, this can be done in $O(d \log d)$ arithmetic
operations by FFT from $d+1$ values of $g$. In order to obtain one value
of $g$ we must compute the coordinates of $a+tv_i$ before calling the black box for $f$, at a cost of $n$ arithmetic operations. Interpolating $g$ therefore
takes $O(d\log d + dn)$ arithmetic operations.
Since we have $n-1$ such polynomials to interpolate,
the arithmtic cost of  interpolations is $O(dn (n+\log d))$. 

The roots $a_1,\ldots,a_{k(n-1)}$ of the $n-1$ univariate polynomials
are used at steps 3 and 4 to determine the zero sets of
the $l_i$. At step 5, the multiplicities of the roots of the first polynomial
yield the exponents $\alpha_i$.

At step 3 we test whether the line $(a_1b)$ is included in $Z(f)$, where
$b$ is one the $k(n-1)$ roots. This is done by evaluating the black box
at a random linear combination $ta_1+(1-t)b$. The coordinates of this point
can be computed in $O(n)$ arithmetic operations.
Repeating this for all the $k(n-1)$ roots takes $O(kn^2)$ operations.

At step 4 we determine
$H_1=\mathrm{Span}(a_1,b_2,\ldots,b_{n-1})$, where the $b_i$ have been
found at Step 3.
Finding an equation for $H_1$ amounts to solving a linear system,
and can be done in $O(n^{\theta})$ arithmetic operations as recalled
in Appendix~\ref{other_lie}.
The combined cost of the determination of $H_1$ at steps 3 and 4 is
therefore $O(kn^2+n^{\theta})$. This is repeated for all the hyperplanes,
at a total cost of $O(k(kn^2+n^{\theta}))$ operations.
We recall that these arithmetic operations may take place in a field
extension.

Finally, at Step~6 we evaluate the product $\prod_{i=1}^k l_i(x)^{\alpha_i}$
at some point $x$ and divide $f(x)$ by this product to determine
the normalization factor $\lambda$.
If we use repeated squaring to evaluate the powers $l_i(x)^{\alpha_i}$,
we can complete Step~6 in $O(k(n+\log d))$ arithmetic operations.
Since $k \leq d$ this is negligible compared to the cost of the univariate
interpolations at Step~1.
The above analysis can be therefore be summarized as follows.
\begin{proposition} \label{hyperplane_cost}
  The algorithm of Section~\ref{hyperplane} obtains a factorization
  of the form $f(x)=\lambda.l_1(x)^{\alpha_1} \cdots l_k(x)^{\alpha_k}$
  using $O(k(kn^2+n^{\theta})+dn (n+\log d))$ arithmetic operations.
  The algorithm also needs to
  compute the roots of $n-1$ univariate polynomials of degree $d$,
  and for one of these polynomials it needs to determine
  the multiplicities of roots.
\end{proposition}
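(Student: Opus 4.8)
The plan is to walk through the six steps of the hyperplane algorithm of Section~\ref{hyperplane} and tally the arithmetic operations performed at each one, treating the univariate root computations as the non-algebraic steps that the statement isolates and reports separately. The bound to be obtained is the sum of three contributions: the interpolation cost at Step~1, the hyperplane-determination cost at Steps~3--4, and a negligible remainder from Steps~5--6.

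First I would handle Step~1, where the $n-1$ univariate polynomials $g(t)=f(a+tv_i)$ are reconstructed from their values. Each evaluation of $g$ requires computing the coordinates of $a+tv_i$, costing $O(n)$ arithmetic operations, and each polynomial has degree $d$, so $d+1$ such evaluations are needed, followed by one interpolation; using an FFT-based univariate interpolation this costs $O(d\log d)$ operations per polynomial. Summing over the $n-1$ lines gives $O(dn(n+\log d))$ arithmetic operations for Step~1. The actual root-finding on these polynomials is a univariate factorization and is reported separately in the statement; for one of them (the polynomial reused at Step~5) we additionally need root multiplicities, which is again part of that univariate computation and contributes no extra arithmetic operations.

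Next I would account for Steps~3 and~4, which determine the $k$ hyperplanes. For a single hyperplane, Step~3 tests for each of at most $k(n-1)$ candidate points $b$ whether the line $(a_1b)$ lies in $Z(f)$ by forming a random point $ta_1+(1-t)b$ (costing $O(n)$ operations) and calling the black box; this gives $O(kn^2)$ arithmetic operations. Step~4 then solves a linear system to find an equation for the span of the $n-1$ selected points, costing $O(n^{\theta})$ operations as recalled in Appendix~\ref{other_lie}. Thus one hyperplane costs $O(kn^2+n^{\theta})$, and repeating for all $k$ hyperplanes costs $O(k(kn^2+n^{\theta}))$. Step~6 evaluates $\prod_i l_i(x)^{\alpha_i}$ by repeated squaring and performs one division, costing $O(k(n+\log d))$ operations, which is dominated by the Step~1 interpolation cost since $k\le d$. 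Adding $O(dn(n+\log d))$, $O(k(kn^2+n^{\theta}))$ and the negligible $O(k(n+\log d))$ yields the claimed bound $O(k(kn^2+n^{\theta})+dn(n+\log d))$.

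The only point that needs care — and the closest thing here to an obstacle in what is otherwise a routine accounting — is that the arithmetic operations in Steps~3--5 may have to be performed in a field extension of $K$, since the linear forms $l_i$ need not have coefficients in $K$; so "one arithmetic operation" is to be understood in that extension, exactly as in the analysis of the Lie-theoretic algorithm in Appendix~\ref{other_lie}. This does not affect the operation count. Consequently the only genuinely non-algebraic ingredients are the $n-1$ univariate root computations arising from the polynomials of Step~1, one of which additionally requires the determination of root multiplicities, which is precisely what the statement records.
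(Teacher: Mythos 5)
Your proposal is correct and follows essentially the same route as the paper's own analysis: the identical decomposition into Step~1 interpolation cost $O(dn(n+\log d))$, Steps~3--4 hyperplane determination cost $O(k(kn^2+n^{\theta}))$, and the negligible Step~6 contribution, with the univariate root computations and multiplicity determination reported separately and the same caveat about arithmetic in a field extension.
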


For comparison with the Lie-theoretic algorithm, setting $k=n$
in Proposition~\ref{hyperplane_cost} 
yields a count of $O(n^4+dn (n+\log d))$ arithmetic operations.
If $d$ remains polynomially bounded in $n$, this is always smaller
than the corresponding $O(n^{2\theta}+n^3d)$ bound for the black
box version of the Lie-theoretic algorithm.\footnote{We recall that the
  term $n^{2\theta}$ comes from the computation of the Lie algebra of $f$,
  and the term $n^3d$ from the arithmetic cost of polynomial interpolation.
  The arithmetic computations for these two tasks take place in the coefficient
field $K$ of $f$ rather than in a field extension.}
In the white box model, the arithmetic cost of that algorithm
drops to $O(n^{2\theta})$. The Lie-theoretic algorithm therefore
becomes preferable from the point
of view of the arithmetic cost when $d$ exceeds $n^{2\theta-2}$.

\small

\small

{  \section*{Acknowledgements} P.K. would like to thank Gilles Villard
  for useful pointers to the literature on computational linear algebra.}

\bibliographystyle{plain}

\end{document}